\theoremstyle{plain} 
\newtheorem{theorem}{Theorem}
\newtheorem{lemma}[theorem]{Lemma} 
\newtheorem{corollary}[theorem]{Corollary} 
\theoremstyle{definition} 
\newtheorem{example}{Example} 
\newcommand{\lca}{\ensuremath\mathsf{lca}}
\newcommand{\poly}{\ensuremath\mathrm{poly}}
\title{Improved Extended Regular Expression Matching}
\author{Philip Bille \\\texttt{phbi@dtu.dk} \and Inge Li G{\o}rtz \\\texttt{inge@dtu.dk} \and Rikke Schjeldrup Jessen \\\texttt{rscje@dtu.dk}}
\begin{document}
\maketitle
 \thispagestyle{empty}
\begin{abstract}
An extended regular expression $R$ specifies a set of strings formed by characters from an alphabet combined with concatenation, union, intersection, complement, and star operators. Given an extended regular expression $R$ and a string $Q$, the extended regular expression matching problem is to decide if $Q$ matches any of the strings specified by $R$. Extended regular expressions are a basic concept in formal language theory and a basic primitive for searching and processing data. 
Extended regular expression matching was introduced by Hopcroft and Ullman in the 1970s [\textit{Introduction to Automata Theory, Languages and Computation}, Addison-Wesley, 1979], who gave a simple dynamic programming solution using $O(n^3m)$ time and $O(n^2m)$ space, where $n$ is the length of $Q$ and $m$ is the length of $R$. Since then, several solutions have been proposed, but few significant asymptotic improvements have been obtained. The current state-of-the-art solution, by Yamamoto and Miyazaki~[COCOON, 2003], uses $O(\frac{n^3k + n^2m}{w} + n + m)$ time and $O(\frac{n^2k + nm}{w} + n + m)$ space, where $k$ is the number of intersection and complement operators in $R$ and $w$ is the number of bits in a machine word. This roughly replaces the $m$ factor with $k$ in the dominant terms of both the space and time bounds of the classical Hopcroft and Ullman algorithm. 

In this paper, we revisit the problem and present a new solution that significantly improves the previous time and space bounds. Our main result is a new algorithm that solves extended regular expression matching in 
    \[
    O\left(n^\omega k + \frac{n^2m}{\max(w/\log w, \log (n + m))} + m\right)
    \]
    time and $O(\frac{n^2 \log k}{w} + n + m) = O(n^2 +m)$ space, where $\omega \approx 2.3716$ is the exponent of matrix multiplication. Essentially, this replaces the dominant $n^3k$ term with $n^\omega k$ in the time bound, while simultaneously improving the $n^2k$ term in the space to $O(n^2)$. We also consider the interval operator (also known as the counting operator) and show how to easily and efficiently extend our algorithms to this case.  

Our results are based on a surprisingly simple combination of techniques and insights, including a compact representation to store and efficiently combine substring matches, a clustering technique for parse trees of extended regular expressions, and a new efficient combination of finite automaton simulation with our substring match representation to speed up the classic dynamic programming solution.

\end{abstract}

\newpage
\setcounter{page}{1}
\section{Introduction}
An extended regular expression $R$ specifies a set of strings formed by characters from an alphabet combined with concatenation ($\odot$), union ($\mid$), intersection ($\cap$), complement ($\neg$), and star ($^\ast$) operators. For instance, $\neg((a|b)^*)\odot b$ describes the set of strings that end in $b$ and contain at least one character that is not $a$ or $b$. Given an extended regular expression $R$ and a string $Q$, the extended regular expression matching problem is to decide if $Q$ matches any of the strings specified by $R$. The intersection and complement are the extended operators, and without these, the expression is simply called a \emph{regular expression}. 
Regular expressions are a fundamental concept in formal language theory introduced by Kleene in the 1950s~\cite{Kleene1956}, and extended and non-extended regular expression matching is a basic tool in computer science for searching and processing text. Standard tools such as \texttt{grep} and \texttt{sed} provide direct support for extended regular expression matching in files, and the scripting language \texttt{perl}~\cite{Wall1994} is a full programming language designed to support extended regular expression matching easily. Extended regular expression matching appears in many large-scale data processing applications such as internet traffic analysis~\cite{JMR2007, YCDLK2006, KDYCT2006, GWXC+2023, MPNT+2010, TSCV2004, BGP2025, AFSK+2015, BC2013}, data mining~\cite{GRS1999, AB2003, TBG2008, AGJ2007}, databases~\cite {LM2001, Murata2001, HP2003, MW1995, FLMTW2011}, computational biology~\cite{Myers1996, NR2003, AH2006, XWD2004, ER2002}, and human-computer interaction~\cite{KHDA2012, SCP2012, SCPF2013}.

The non-extended regular expression matching problem is a well-studied central problem in pattern matching~\cite{Thompson1968, Galil1985, Myers1992, BFC2008, Bille2006, BT2010, BT2010, BG2022, BG2024, BM2013, DGGS2022, BI2016, BGL2017, Schepper2020}. A classic textbook solution from 1968 by Thompson~\cite{Thompson1968} solves the problem in $O(nm)$ time and $O(m)$ space, where $n$ is the length of the string $Q$ and $m$ is the length of $R$. We can improve this time bound by polylogarithmic factors~\cite{Myers1992, BFC2008, Bille2006, BT2009, BT2010}, but by recent conditional lower bounds, we should not hope to achieve $O((nm)^{1-\epsilon})$ time for any $\epsilon>0$~\cite{BI2016, BGL2017, Schepper2020}. Hence, the complexity of non-extended regular expression matching in terms of $n$ and $m$ is essentially settled up to polylogarithmic factors.

The extended regular expression matching problem has also attracted significant research interest~\cite{HU1979, Hirst1989, KMb1995, KZ2002, Yamamoto2001, Yamamoto2000, ISY2003, YM2003, RV2003, Rocsu2005, Rocsu2007, Petersen2002, NT2026}, but in contrast to non-extended regular expression matching, surprisingly little is known about the complexity of the problem. The problem was first introduced in 1979 by Hopcroft and Ullman~\cite{HU1979} as a textbook exercise (see exercise 3.23 and the associated solution). They presented a simple dynamic programming algorithm that runs in $O(n^3m)$ time and uses $O(n^2m)$ space. However, with careful implementation and modern matrix multiplication algorithms, it is straightforward to improve the bound to $O(n^\omega m)$ time and $O(n^2m/w + n + m)$ space. Here, $\omega \approx 2.3713$ is the exponent of matrix multiplication~\cite{ADWXXZ2025} and $w$ is the length of a machine word. Apparently, these improvements have been overlooked by later papers. We review this algorithm in detail and show how to implement it efficiently in Section~\ref{sec:dynamicprogramming}.

Several improved algorithms have been proposed~\cite{Hirst1989, KMb1995, KZ2002, Yamamoto2001, Yamamoto2000, ISY2003, YM2003,RV2003, Rocsu2005, Rocsu2007}. Hirst in 1989~\cite{Hirst1989}, and later Kupferman and Zuhovitzky in 2002~\cite{KZ2002}, and Ilie, Shan, and Yu in 2003~\cite{ISY2003} all proposed solutions that run in $O(n^2m)$ time. Unfortunately, these results were all found to contain errors, and their claimed complexity bounds were incorrect. See Ro{\c{s}}u~\cite{Rocsu2007, Rocsu2005} for a detailed discussion of this and the conditional lower bound discussed below.  

Only a few improvements of the classic dynamic programming solution (without errors) are known. Let $n$ and $m$ be the lengths of the string and the extended regular expression, respectively. Yamamoto and Miyazaki~\cite{YM2003} gave an $O(\frac{n^3k + n^2m}{w} + n + m)$ time and $O(\frac{n^2k + nm}{w} + n + m)$ space algorithm, where $k$ is the number of intersection and complement operators in the regular expression. Essentially, if we ignore the $w$ factor, this result replaces the $m$ factor with $k$ in the dominant term of both the $O(n^3k)$ time and $O(n^2m)$ space bounds for most combinations of parameters. This result uses techniques different from the original dynamic programming algorithm, and it does not appear that the bound can be easily improved using matrix multiplication. Yamamoto~\cite{Yamamoto2001} also gave an $O(n^3\lambda + n^2m)$ time and $O(n^2 \lambda + nm)$ space algorithm, where $\lambda$ is the number of \emph{critical modules} of the regular expression. In general, a critical module is a subexpression of $R$ that contains a single complement or intersection operator. However, a module is only critical depending on its placement relative to other modules and star operators, and hence $\lambda \leq k$. Thus, compared to Yamamoto and Miyazaki~\cite{YM2003} (again ignoring the $w$ factor), the $k$ factor is replaced by $\lambda$. As before, it does not appear that this bound can be improved using matrix multiplication.  Finally, Ro{\c{s}}u~\cite{Rocsu2007} proposed an incomparable result that uses exponential time and space in $m$. 

On the hardness side, Petersen~\cite{Petersen2002} showed that regular expression matching extended with only the intersection operator is LOGCFL-complete. Very recently (and after the initial announcement of our result), Nogami and Terauchi~\cite{NT2026} showed that there is no $O(n^{\omega - \epsilon}\poly(m))$ time algorithm for any $\epsilon >0$ unless the $k$-clique hypothesis is false ($\poly(m)$ denotes a constant degree polynomial in $m$). Thus, under this hypothesis, we cannot improve the $\omega$ exponent on the dominant $n^\omega$ term by any constant. Note that the earlier incorrect $O(n^2m)$ time solutions break this lower bound. Nogami and Terauchi also note that a weaker conditional lower bound can be achieved via a recent reduction from $3k$-Clique to context-free language recognition by Abboud et al.~\cite{ABW2018}. See Table~\ref{tab:previousresults} for an overview of the bounds.

\begin{table}[t]
    \centering
    \renewcommand{\arraystretch}{1.2}
    \resizebox{\textwidth}{!}{%
    \begin{tabular}{|c|c|l|}
    \hline
    Time & Space & \\
    \hline
    $O(n^3m)$ & $O(n^2m)$ & HopCroft and Ullman~\cite{HU1979} \\
    $O(n^\omega m)$ & $O(n^2m)$ & HopCroft and Ullman~\cite{HU1979} w. matrix mult.\\
    $O(\frac{n^3k + n^2m}{w} + n + m)$ & $O(\frac{n^2k + nm}{w} + n + m)$ & Yamamoto and Miyazaki~\cite{YM2003}\\
    $O(n^3\lambda + n^2m)$ & $O(n^2 \lambda + nm)$ & Yamamoto~\cite{Yamamoto2001}\\
    $\Omega(n^{\omega - \epsilon}\poly(m))$ & & Nogami and Terauchi~\cite{NT2026} \\
    $O\left(n^\omega k + n^2m\right)$ & $O(n^2+ m)$ & \textbf{This paper}\\
    $O\left(n^\omega k + \frac{n^2m}{\max(w/\log w, \log (n+m))} + m\right)$& $O(n^2+ m)$ & \textbf{This paper}\\
    \hline
\end{tabular}}
    \caption{Here $k$ is the number of intersection and complement operators in the regular expression, and $\lambda$ is the number of \emph{critical modules} of the regular expression.} 
    \label{tab:previousresults}
\end{table}

Our main focus is the extended operators (intersection and complement), but we will also consider including the important \emph{interval operator} (also often called the \emph{counting} operator), $R^{[a,b]}$, which specifies the set of strings containing between $a$ and $b$ concatenations of $R$. The interval operator is a standard in most of the practical tools and applications mentioned above for extended operations. While the interval operator has been studied from various perspectives, see e.g.,~\cite{WMN2009, BT2010, KT2003, MS1972, GGM2012, THLSVV2020, HSJTV2023, CL2015}, we are not aware of any published results on the complexity of extended regular expression matching with the interval operator.

\paragraph{Our Contributions} 
The main contributions of this paper are the following. 
\begin{itemize}
    \item A new simple algorithmic framework for time- and space-efficient extended regular expression matching. The framework combines non-deterministic finite automata simulation for (non-extended) regular expression matching, matrix multiplication, and simple tree techniques such as clustering and heavy path decompositions.
    \item We plug in the classic textbook algorithm for (non-extended) regular expression matching by Thompson into our framework to obtain an algorithm for extended regular expression matching that uses $O(n^\omega k + n^2m)$ time and $O(n^2 + m)$ space. This improves both the time and space bounds of all of the previous results for almost all combinations of parameters (see Section~\ref{sec:mainresults} for details). Notably, we significantly improve the space of the previous approaches.     
    Furthermore, the exponent on the $n^\omega$ term in the time complexity is optimal by recent conditional lower bounds. 
 
    \item We plug in more advanced algorithms for (non-extended) regular expression matching in our framework, to improve the $n^2m$ term in the above time bound. This leads to an algorithm using  $O\left(n^\omega k + \frac{n^2m}{\max(w/\log w, \log (n+m))} + m\right)$ time and $O(n^2 + m)$ space. 

    \item We demonstrate the flexibility of our framework by extending it to also handle interval operators in the extended regular expression. This requires a straightforward minor addition to the framework and only incurs a logarithmic overhead in the time complexity. 
    
    \item Along the way, we rephrase the classic dynamic programming algorithm by Hopcroft and Ullman with the straightforward improvement using matrix multiplication.

\end{itemize}

\subsection{Main Results}\label{sec:mainresults}
Throughout the paper, let $m$ and $n$ be the length (number of symbols) of the regular expression $R$ and the string $Q$, respectively. All bounds presented below and in the paper hold for a standard unit-cost RAM with $w$-bit machine words supporting standard arithmetic and logical operations on machine words. This means that the algorithms can be implemented directly in standard imperative programming languages such as C or C++. An index into $R$ and $Q$ can be stored in a single machine word and thus $w \geq \log (n + m)$. 

Our algorithm uses Thompson's classic non-deterministic finite automaton (TNFA)~\cite{Thompson1968} for (non-extended) regular expression matching (see Section~\ref{sec:preliminaries} for the full definition). Given a TNFA $A$ of size $m$, a string $Q$ of length $n$, and two states $s, t$ in $A$, a \emph{TNFA simulation algorithm} outputs every prefix $i$ of $Q$ such that $Q[1,i]$ matches a path from $s$ to $t$. Our main result is the following. 

\begin{theorem}\label{thm:main_reduction}
Let $R$ be an extended regular expression of length $m$ containing $k$ extended operators, let $CS$ be the cluster decomposition of the parse tree of $R$, and let $Q$ be a string of length $n$. Then, we can solve the extended regular expression matching problem for $R$ and $Q$ in space $O(\frac{n^2 \log k}{w} + \max_{C \in CS} S(n,m_C) + m)$ and time 
\[
O\left(n^\omega k + n \sum_{C\in CS}T(n,m_C) + m\right).
\]
Here $T(n,m)$ and $S(n,m)$ denote the time and space, respectively, of a TNFA simulation algorithm.
\end{theorem}
If we plug in Thompson's textbook algorithm~\cite{Thompson1968} for the TNFA simulation, we obtain the following result. 
\begin{corollary}\label{cor:thompson}
Given an extended regular expression $R$ of length $m$ containing $k$ extended operators and a string $Q$ of length $n$, we can solve the extended regular expression matching problem for $R$ and $Q$ in $O(n^2+m)$ space and $O\left(n^\omega k + n^2m\right)$ time.
\end{corollary}
Corollary~\ref{cor:thompson} improves the time and space bounds of all the previous results for almost all combinations of parameters. For the time bound, we either replace the dependency on $m$ with $k$ compared to the dynamic programming solution~\cite{HU1979} (implemented with fast matrix multiplication) or improve the dominant $n^3k/w$ term by a factor $n^{3-\omega} \approx n^{0.6284}$ compared to Yamamoto and Miyazaki~\cite{YM2003}. By the very recent conditional lower bound by Nogami and Terauchi~\cite{NT2026}, the exponent on our dominant $n^\omega$ factor is optimal assuming the $k$-clique hypothesis. For the space bound, we obtain the first $O(n^2 + m)$ space solution, improving the previous results by a factor $m$ or $k/w$. A key component of both the dynamic programming and our algorithms is storing $(n+1)\times (n+1)$ boolean matrices for fast matrix multiplication. Note that this requires $\Omega(n^2)$ \emph{bits} of space, and we match this bound within a logarithmic factor.

With more advanced TNFA simulation algorithms, we can further improve the $O(n^2m)$ term in the time complexity of Corollary~\ref{cor:thompson}. Nearly all of the existing (non-extended) regular expression matching algorithms are TNFA simulation algorithms~\cite{Myers1992, BFC2008, Bille2006, BT2010}.  The state-of-the-art algorithms achieve either $T(n,m) = O(nm/\log (n + m) + n + m)$ and $S(n,m) = O(m + n^\varepsilon)$, for constant $\varepsilon$, $0 < \varepsilon < 1$~\cite{BFC2008} or $T(n,m) = O(nm\log w/w + n + m)$ and $S(n,m) = O(m)$~\cite{Bille2006, BT2010}. Note that the former result is slightly different from what is stated in the original paper (see~\cite{BG2022} for how to improve this). Plugging these into our framework, we obtain the following result.

\begin{theorem}\label{thm:main}
Given an extended regular expression $R$ of length $m$ containing $k$ extended operators and a string $Q$ of length $n$, we can solve the extended regular expression matching problem for $R$ and $Q$ in space $O(\frac{n^2 \log k}{w} + n + m) = O(n^2 +m)$ and time 
\[
O\left(n^\omega k + \frac{n^2m}{\max(w/\log w, \log (n+m))} + m\right)
\]
\end{theorem}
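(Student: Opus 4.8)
The plan is to start from the improved dynamic programming baseline of Section~\ref{sec:dynamicprogramming} --- which already gives $O(n^\omega m)$ time and $O(n^2 m/w + n + m)$ space --- and shave the dominant factor from $m$ to $k$ in the time and from $m$ to $\log k$ (hence $n^2m/w$ to $O(n^2)$) in the space. Throughout I assume $R$ is preprocessed into a binary parse tree $T$ with $O(m)$ nodes over the operator set $\{\odot, \mid, {}^*, \cap, \neg\}$, exactly $k$ of whose nodes are \emph{extended} ($\cap$ or $\neg$). For a node $v$, let $X_v$ be the boolean table with $X_v[i,j]=1$ iff the substring $Q(i..j]$ matches the subexpression rooted at $v$; the recurrences are character/identity at leaves, boolean matrix product for $\odot$, entrywise OR for $\mid$, reflexive--transitive closure for ${}^*$, entrywise AND for $\cap$, and entrywise complement for $\neg$ (with the usual care at the diagonal and the empty substring). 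The answer is $X_{\mathrm{root}}[0,n]$, and each table is stored bit-packed in $O(n^2/w)$ words. In this language the extended operators themselves are cheap ($O(n^2/w)$ each); the expensive steps are the matrix products at $\odot$-nodes and the closures at ${}^*$-nodes, of which the baseline performs $O(m)$, and the expensive resource is storing all $O(m)$ tables.

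\textbf{Clustering.} The first ingredient is a decomposition of $T$ into $O(k)$ \emph{clusters}: contract every maximal connected subtree of $T$ that contains no extended operator into a single \emph{regular context} --- a regular expression with a few \emph{holes}, one per edge leaving the subtree toward a child that does contain an extended operator --- and keep the $k$ extended nodes as singleton clusters. Cutting large regular subtrees at their branching nodes bounds the number of holes per context by a constant while keeping the number of clusters $O(k)$ and the total size $\sum_C |C| = O(m)$; moreover the resulting cluster tree has only $O(k)$ leaves, so (evaluating the smaller side first, i.e.\ via its Strahler number) it admits a pebbling using only $O(\log k)$ simultaneously live clusters. Processing $T$ bottom-up over this cluster tree, each extended cluster is an $O(n^2/w)$-time step, so the whole task reduces to: given a regular context $C$ whose holes have been filled by the already-computed output tables $M_1,\dots,M_p$ of its child clusters, compute the output table of $C[M_1,\dots,M_p]$.

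\textbf{Regular contexts via generalized automaton simulation (the crux).} I would build a Thompson-style $\varepsilon$-NFA $N_C$ for the regular context of size $O(|C|)$, but with a special \emph{tunnel} transition for each hole: traversing tunnel $h$ from a state--position configuration $(p,t)$ to $(q,t')$ is permitted exactly when $M_h[t,t']=1$. The target table is, for every $i\le j$, whether an accepting state of $N_C$ is reachable by reading $Q(i..j]$ under this rule. The plan is to separate the two kinds of moves. (i) The purely character/$\varepsilon$ moves form an ordinary NFA of size $O(|C|)$; its substring-reachability is obtained over all $n^2$ position pairs by running the word-packed / Four-Russians NFA simulation once from each start position, in $O(n^2|C|/\mu)$ time with $\mu=\min(w/\log w,\log n)$. (ii) The $p$ tunnels are folded in by contracting $N_C$ to a reduced automaton on its tunnel endpoints and accepting state, on which a single closure-style computation --- interleaving the character-reachability tables of stage (i) with boolean matrix multiplications by the $M_h$ --- resolves all tunnel traversals. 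The delicate point, and the main obstacle I expect, is that a hole may lie inside a ${}^*$, so tunnels can be re-entered; one must show this fixpoint stabilizes after a number of matrix multiplications proportional to the number of tunnels \emph{in this cluster} (a constant), not to $|C|$ or the star-nesting depth. I would prove this by the standard monotone/idempotent-closure argument: once the character-closure is saturated, each tunnel needs to be ``fired'' only a constant number of times before no new reachability appears. Summing over clusters, the regular parts cost $\sum_C O(n^2|C|/\mu)=O(n^2m/\mu)$ and the tunnels plus extended nodes cost $O(k)$ boolean matrix operations, i.e.\ $O(n^\omega k)$ time, plus $O(m)$ for reading and parsing.

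\textbf{Space and assembly.} Space is controlled by processing clusters in the $O(\log k)$-pebbling order: at most $O(\log k)$ output tables are kept, each $O(n^2/w)$ words, for $O(n^2\log k/w)$; the simulation of a single cluster needs only $O(n^2/w+|C|)$ working space (it writes its output table directly and reuses the active-set words and the reachability chunks), and $T$ together with the current $N_C$ costs $O(m)$. Since $\log k\le\log m\le w$ in the word-RAM model, $n^2\log k/w=O(n^2)$, giving total space $O(n^2\log k/w+n+m)=O(n^2+m)$. The degenerate case $k=0$ is plain regular-expression matching, solved by NFA simulation in $O(nm/\mu+m)\le O(n^2m/\mu+m)$ time and $O(m)$ space, within the stated bounds. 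Combining the cases yields Theorem~\ref{thm:main}.
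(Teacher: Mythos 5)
Your overall strategy coincides with the paper's: cluster the parse tree into $O(k)$ pieces each containing $O(1)$ extended operators (the paper gets exactly one per internal cluster by closing the set of extended operators under lowest common ancestors, Lemma~\ref{lem:clustering}), build a Thompson NFA for each cluster with a placeholder transition for the hole, obtain substring reachability across the regular part by running a black-box simulation from every start position (Lemma~\ref{lem:tnfamatchgraph}, Section~\ref{sec:blackbox}), combine across the hole with $O(1)$ boolean matrix operations, and keep only $O(\log k)$ tables alive by recursing into the larger child first (the paper phrases this via heavy-path decomposition, Section~\ref{sec:reducingspace}). The accounting $O(n^\omega k + n^2m/\min(w/\log w,\log n) + m)$ time and $O(n^2\log k/w + n + m)$ space is the same.

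The one step that does not hold as you state it is the resolution of a tunnel lying under a star. You claim that ``once the character-closure is saturated, each tunnel needs to be fired only a constant number of times before no new reachability appears,'' and propose to prove that the fixpoint stabilizes in $O(1)$ rounds. This is false: for $R=(\neg a)^{*}$ and $Q=b^{n}$, the edge $(g_1,g_{n+1})$ appears only after the tunnel has been traversed $n$ times, each traversal consuming one character, so a literal fixpoint iteration needs $\Theta(n)$ rounds and $\Theta(n)$ matrix products per cluster, giving $O(n^{\omega+1}k)$ rather than $O(n^{\omega}k)$. What saves the bound is not fast stabilization but the fact that the unboundedly iterated traversal is expressible as a single \emph{transitive closure} of one $(n+1)\times(n+1)$ relation, computable in $O(n^\omega)$ time by the Fischer--Meyer reduction. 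The paper makes this explicit with the closed form
\[
G(v)= G_{\theta,\phi}\mid\bigl(G_{\theta,\theta'} \odot G(p) \odot ( G_{\phi',\theta'}\odot G(p))^{*} \odot  G_{\phi',\phi}\bigr),
\]
where $G_{\phi',\theta'}\odot G(p)$ encodes ``leave the hole, return to it through the regular part, and traverse it once more,'' and the outer star absorbs the arbitrary number of re-entries (Lemma~\ref{lemma_correcness}). Your plan goes through once you replace the ``constant number of firings'' argument by one such closure; note also that with more than one tunnel per cluster, as your branching-node cut permits, you would need the analogous closure over all tunnel endpoints simultaneously, which is precisely why the paper arranges for exactly one extended node per cluster so that a constant number of products and a single star suffice.
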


Our algorithmic framework is sufficiently general and simple to allow us to easily and efficiently extend it with other operators. To demonstrate this, we show how to efficiently add support for the interval operator. For an extended regular expression $R$ with interval operators, we achieve the same bounds as in Theorem~\ref{thm:main} except for an additional factor $\log B$ on the running time for the interval operators, where $B$ is the maximum of the upper bounds in the ranges for interval operators in $R$ (see Theorem~\ref{thm:maininterval} for details).

\subsection{Technical Overview}
Our results are based on a surprisingly simple combination of techniques and insights, including a compact representation to store and efficiently combine substring matches, a clustering technique for parse trees of extended regular expressions, and a new efficient combination of finite automaton simulation with the substring match representation to speed up the classic dynamic programming solution.


We use a simple framework, which we call match graphs, for representing substring matches of $R$ in $Q$ as directed graphs. We show how to compactly represent match graphs as boolean matrices and efficiently combine them according to the extended regular expression operators using either boolean matrix multiplication, boolean matrix transitive closure, or simpler operations. 
The match graphs allow us to describe, in a simple way, the classic dynamic programming solution of Hopcroft and Ullman~\cite{HU1979} and its improvement. Essentially, the algorithm processes the parse tree of an extended regular expression in a bottom-up traversal, computing at each node $v$ a match graph representing the substrings of $Q$ that match the subexpression of $R$ represented by $v$. Using the match graph operations, we show how to immediately reduce the time from $O(n^3m)$ to $O(n^\omega m)$.  

To speed up the dynamic programming algorithm, we combine the match graph framework with finite automaton simulation. We first decompose the parse tree of $R$ into a hierarchy of connected subtrees, called \emph{clusters}, such that each leaf cluster contains no node corresponding to an extended operator, and each internal cluster contains exactly one node that either corresponds to an extended operator or is the lowest common ancestor of a pair of extended operators, thus guaranteeing that we have no more than $O(k)$ clusters, where $k$ is the number of extended operators in $R$. We then process the hierarchy of clusters in a bottom-up traversal and compute at each cluster $C$ the match graph representation for the subtree of the parse tree rooted at the root of $C$. To achieve our desired running time, the key technical challenge is to efficiently combine the slow match graph computation with fast finite automaton simulation of a (non-extended) regular expression corresponding to $C$. We present a new technique that achieves this using only a constant number of match graph operations and a constant number of finite automata simulations. In combination with the textbook algorithm by Thompson~\cite{Thompson1968} for the finite automata simulations, this leads to an algorithm for extended regular expression matching using $O(n^\omega k + n^2 m)$ time and $O(n^2k/w + n + m)$ space. 

We then improve the space to $O(n^2\log k/w + n + m) = O(n^2 + m)$. To do so, we show how to recursively process the hierarchy of clusters top-down according to the number of descendants of the children of each cluster and carefully discard match graphs along the way. This gives us a simple algorithm using $O(n^\omega k + n^2 m)$ time and $O(n^2 + m)$ space. 

To achieve the full result of Theorem~\ref{thm:main}, we generalize the finite automaton simulations in our algorithm to work with a black-box implementation of a (non-extended) regular expression matching algorithm. 
Plugging in current state-of-the-art results improves the $O(n^2 m)$ term in the running time by a factor $\max(w/\log w, \log (n+m))$ and thus leads to Theorem~\ref{thm:main}. 

Finally, to demonstrate the power and simplicity of our approach, we extend our algorithm to also include interval operators. We do so by defining and efficiently implementing another simple match graph operation for the interval operator and plugging it into our framework.




\subsection{Outline}
We review regular expressions and finite automata in Section~\ref{sec:preliminaries}. We present match graphs and the improved implementation of the Hopcroft and Ullman classic dynamic programming algorithm in Section~\ref{sec:dynamicprogramming}. Section~\ref{sec:improvedalgorithm} presents (a simplified version of) our new algorithm, together with our clustering and a generalization of match graphs. In Sections~\ref{sec:reducingspace} and~\ref{sec:blackbox}, we show how to reduce the space and implement our algorithm with a black-box finite automaton simulation. Finally, in Section~\ref{sec:intervaloperator}, we show how to extend our results to efficiently support the interval operator.

\section{Regular Expressions and Finite Automata}\label{sec:preliminaries}
We review the classical concepts used in the paper. For more details, see e.g., Hopcroft and Ullman~\cite{HU1979}.  

\paragraph{Regular Expressions} 
We consider the set of non-empty extended regular expressions over an alphabet $\Sigma$, defined recursively as follows. If $\alpha \in \Sigma \cup \{\epsilon\}$, then $\alpha$ is an extended regular expression, and if $S$ and $T$ are extended regular expressions, then so is the \emph{concatenation} ($S \odot T$), the \emph{union} ($S|T$), the \emph{intersection} ($S \cap T$), the \emph{complement} ($\neg S$), and the \emph{star} ($S^*$). We often omit the concatenation $\odot$ when writing extended regular expressions. 

The \emph{language} $L(R)$ generated by an extended regular expression $R$ is defined recursively as follows. If $\alpha \in \Sigma \cup \{\epsilon\}$, then $L(\alpha)$ is the set containing the single string $\alpha$.  If $S$ and $T$ are extended regular expressions, then $L(S \odot T) = L(S)\odot L(T)$, that is, any string formed by the concatenation of a string in $L(S)$ with a string in $L(T)$, $L(S | T) = L(S) \cup L(T)$, $L(S \cap T) = L(S) \cap L(T)$, $L(\neg{S}) = \Sigma^* - L(S)$, and $L(S^*) = \bigcup_{i \geq 0} L(S)^i$, where $L(S)^0 = \{\epsilon\}$ and $L(S)^i = L(S)^{i-1} \odot L(S)$, for $i > 0$. Given an extended regular expression $R$ and a string $Q$, the \emph{extended regular expression matching problem} is to decide if $Q \in L(R)$.  

We call the intersection and the complement operators the \emph{extended operators}. Without these, the expression is simply called a \emph{regular expression} and the corresponding matching problem is the \emph{regular expression matching problem}.

\begin{figure}[t]
    \centering
    \includegraphics[]{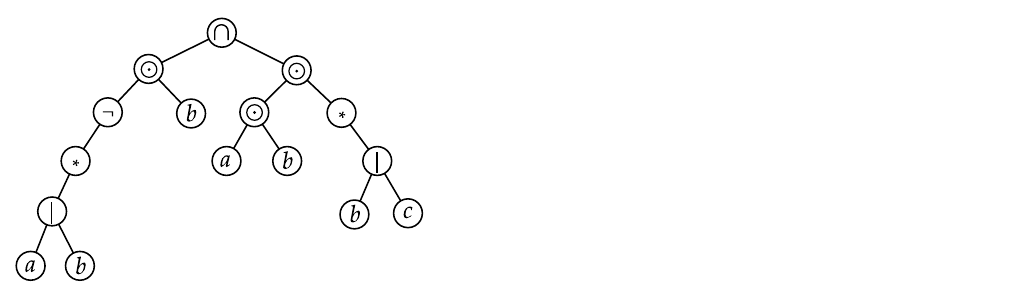}
    \caption{Parse tree for the extended regular expression $(\neg((a|b)^*) b)\cap (a b(b|c)^*)$.}
    \label{fig:parse_tree}
\end{figure}

The \emph{parse tree} $T(R)$ of $R$ (not to be confused with the parse of $Q$ wrt. $R$) is a rooted, binary tree representing the hierarchical structure of $R$.  The leaves of $T(R)$ are labeled by a character from $\Sigma \cup \{\epsilon\}$ and internal nodes are labeled by either $\odot$, $\mid$, $\cap$, $\neg$, or $^*$. For simplicity in our presentation, we identify extended regular expressions with their parse tree. We denote the subtree (or equivalently the subexpression) rooted at $v$ by $R(v)$. See Figure~\ref{fig:parse_tree} for an example. Given two nodes $v$ and $w$ in $T(R)$, the \emph{lowest common ancestor}, denoted $\lca(v,w)$, is the common ancestor of $v$ and $w$ with the largest depth.

\paragraph{Finite Automata}
A \emph{finite automaton} is a tuple $A = (V, E, \Sigma, \Theta, \Phi)$, where $V$ is a set of nodes called \emph{states}, $E \subseteq (V \times V \times \Sigma \cup \{\epsilon\})$ is a set of directed edges between states called \emph{transitions} each labeled by a character from $\Sigma \cup \{\epsilon\}$, $\Theta \subseteq V$ is a set of \emph{start states}, and $\Phi \subseteq V$ is a set of \emph{accepting states}. In short, $A$ is an edge-labeled directed graph with designated subsets of start and accepting nodes. $A$ is a \emph{deterministic finite automaton} (DFA) if $A$ does not contain any $\epsilon$-transitions,  all outgoing transitions of any state have different labels, and there is exactly one start state. Otherwise, $A$ is a \emph{nondeterministic finite automaton} (NFA).  

Given a string $Q$ and a path $p$ in $A$, we say that $p$ and $Q$ match if the concatenation of the labels on the transitions in $p$ is $Q$. Given a state $s$ in $A$ and a character $\alpha$ we define the \emph{state-set transition} $\delta_A(s, \alpha)$ to be the set of states reachable from $s$ through paths matching $\alpha$ (note that the paths may include transitions labeled $\epsilon$). For a set of states $S$ we define $\delta_A(S,\alpha) = \bigcup_{s\in S} \delta_A(s,\alpha)$. We say that $A$ \emph{accepts} a string $Q$ if there is a path from a state in $\Theta$ to a state in $\Phi$ that matches $Q$. Otherwise, $A$ \emph{rejects} $Q$. We can use a sequence of state-set transitions to test acceptance of a string $Q$ of length~$n$ by computing a sequence of state-sets $S_0, \ldots, S_n$, given by $S_0 = \delta_A(\Theta, \epsilon)$ and $S_i = \delta_A(S_{i-1}, Q[i])$, $i=1, \ldots, n$. We have that $\Phi \cap S_n \neq \emptyset$ iff $A$ accepts $Q$.

\paragraph{Thompson NFA}
\begin{figure}[t]
    \centering
    \includegraphics[width=0.7\linewidth]{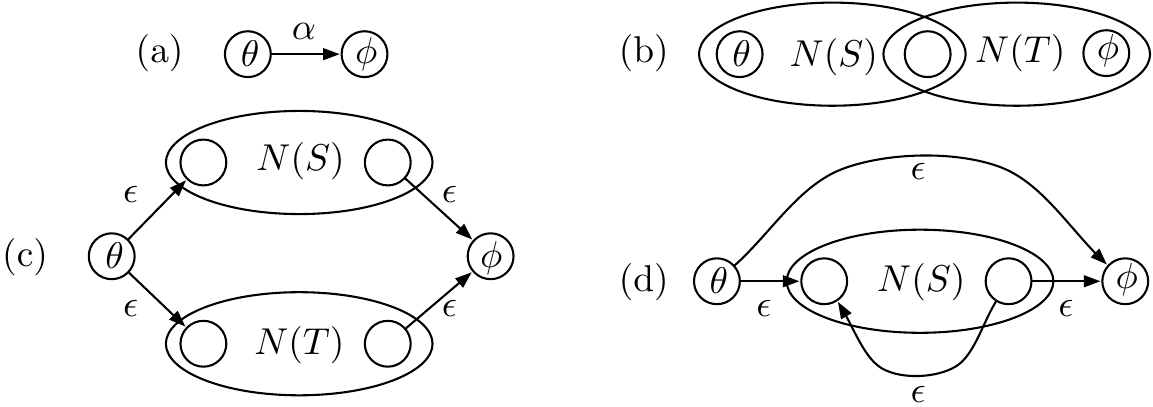}
    \caption{Thompson's recursive NFA construction. The regular
   expression $\alpha \in \Sigma \cup \{\epsilon\}$ corresponds to NFA
   $(a)$. If $S$ and $T$ are regular expressions then $N(ST)$,
   $N(S|T)$, and $N(S^*)$ correspond to NFAs $(b)$, $(c)$, and $(d)$,
   respectively.  In each of these figures, the leftmost node $\theta$
   and rightmost node $\phi$ are the start and the accept nodes,
   respectively.  For the top recursive calls, these are the start and
   accept nodes of the overall automaton. In the recursions indicated,
   e.g., for $N(ST)$ in (b), we take the start node of the
   subautomaton $N(S)$ and identify with the state immediately to the
  left of $N(S)$ in (b). Similarly the accept node of $N(S)$ is
   identified with the state immediately to the right of $N(S)$ in
   (b).}
    \label{fig:Thompson}
\end{figure}

Given a (non-extended) regular expression $R$, we can construct an NFA accepting precisely the strings in $L(R)$ by several classical  methods~\cite{MY1960, Glushkov1961, Thompson1968}. In particular, Thompson~\cite{Thompson1968} gave the simple textbook construction shown in Figure~\ref{fig:Thompson}. We will call an NFA constructed with these rules a \emph{Thompson NFA} (TNFA). A TNFA $N(R)$ for $R$ has at most $2m$ states, at most $4m$ transitions, and can be computed in $O(m)$ time.  We can compute a state-set transition on a single character in $O(m)$ time with a breadth-first search. Hence, it follows that we can solve (non-extended) regular expression matching on $Q$ in $O(nm)$ time and $O(m)$ space using the above state-set transition algorithm.

\section{Dynamic Programming Algorithm}\label{sec:dynamicprogramming}
Let $R$ be an extended regular expression of length $m$ and let $Q$ be a string of length $n$.
We review Hopcroft and Ullman's dynamic programming algorithm~\cite{HU1979} and show how to implement it using fast matrix multiplication in $O(n^\omega m)$ time and $O(n^2m/w + n + m)$ space.

\begin{figure}[t]
    \centering
    \includegraphics[width=0.95\linewidth]{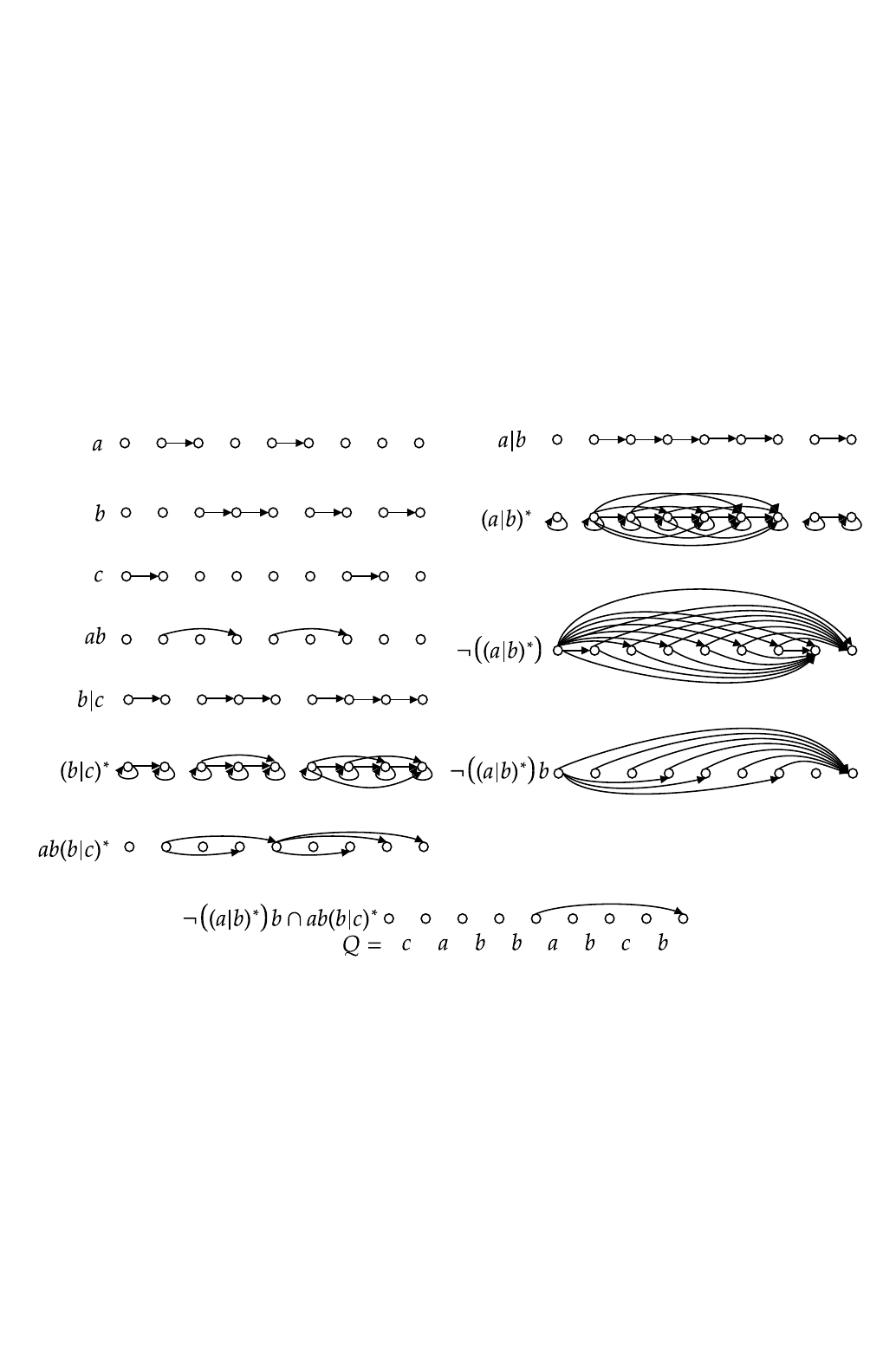}
    \caption{Example of the dynamic programming algorithm for $Q=cabbabcb$ and $R=(\neg((a|b)^*) b)\cap (a b(b|c)^*)$. The parse tree of $R$ is shown in Figure~\ref{fig:parse_tree}. The match graph for each node $v$ in the parse tree is labeled by the subexpression $R(v)$. The match graph of the root node of the parse tree (bottom) tells us that the only matching substring in $Q$ is $Q[5,8] = abcb$.}
    \label{fig:dyn_prog}
\end{figure}

\subsection{Match Graphs}\label{sec:matchgraphs}
We first define a simple, natural graph representation, called a \emph{match graph}, to represent matching substrings of a subexpression of an extended regular expression.

Given a subexpression $R(v)$ rooted in node $v$ in $R$, we define the \emph{match graph} for $R(v)$, denoted $G(v)$, to be the directed graph on $n+1$ vertices $g_1, \ldots, g_{n+1}$  such that there is an edge from $g_i$ to $g_j$ iff $R(v)$ matches the substring $Q[i,j-1]$, $1 \leq i \leq j \leq n+1$. Note that $G(v)$ is acyclic except for self-loops. Also, if $R(v)$ matches the empty string $\epsilon$, then every vertex in $G(v)$ has a self-loop. See Figure~\ref{fig:dyn_prog} for an example.

\subsection{Algorithm}\label{sec:DP_algo}
We compute the match graphs for each subexpression of $R$ in a bottom-up traversal of the parse tree. At each node $v$, we construct a match graph $G(v)$ on $n+1$ vertices $g_1, \ldots, g_{n+1}$ and add edges according to the following cases:

\begin{description}
    \item[Case 1: $v$ is a leaf labeled $\alpha \in \Sigma$.] We add an edge $(g_i, g_{i+1})$ to $G(v)$ for each position $i$ in $Q$ such that $Q[i]=\alpha$ 
    \item[Case 2: $v$ is an internal node with two children.] Let $v_1$ and $v_2$ be the left and right children of $v$, respectively. There are three subcases: 
    \begin{description}
        \item[Case 2.1: $v$ is labeled $\odot$.] We add an edge $(g_i, g_j)$ to $G(v)$ for each pair $i, j$ such that there exist edges $(g_i, g_l)$ in $G(v_1)$ and $(g_l, g_j)$ in $G(v_2)$.   
        \item[Case 2.2: $v$ is labeled $|$.] We add an edge $(g_i, g_j)$ to $G(v)$ for each pair $i,j$ if $(g_i, g_j)$ in $G(v_1)$ or $(g_i, g_j)$ in $G(v_2)$.
        \item[Case 2.3: $v$ is labeled $\cap$.] We add an edge $(g_i, g_j)$ to $G(v)$ for each pair $i,j$ if $(g_i, g_j)$ in $G(v_1)$ and $(g_i, g_j)$ in $G(v_2)$. 
    \end{description}
    \item[Case 3: $v$ is an internal node with a single child.] Let $v_1$ be the child of $v$. There are two subcases: 
    \begin{description}
        \item[Case 3.1: $v$ is labeled $\neg$.] We add an edge $(g_i, g_j)$ to $G(v)$ for each edge not in $G(v_1)$ (including self-loops). 
        \item[Case 3.2: $v$ is labeled $\star$.] We add an edge $(g_i, g_j)$ to $G(v)$ for each pair $i,j$ with a directed path from $g_i$ to $g_j$ in $G(v_1)$ and add self-loops to all vertices, i.e., $G(v)$ is the transitive closure of $G(v_1)$.
    
    \end{description}
\end{description}
Finally, we report that $R$ matches $Q$ iff the edge $(g_1, g_{n+1})$ is in $G(r)$, where $r$ is the root of $T(R)$. 

By induction on the parse tree, it follows that the algorithm correctly determines if $R$ matches $Q$. To implement the algorithm efficiently, we store each match graph $G(v)$ as a boolean  $(n+1) \times (n+1)$ matrix, such that entry $(i,j)$ is $1$ iff there is an edge $(g_i, g_j)$ in $G(v)$. Consider the cases in the bottom-up traversal. We have that case 2.1 is a single boolean matrix multiplication of the matrices representing $G(v_1)$ and $G(v_2)$, and case 3.2 is the \emph{transitive closure} of the matrix representing $G(v_1)$, which can also be computed in the same time as a single matrix multiplication~\cite{FM1971}. The remaining cases are straightforward to implement in $O(n^2)$ time. Thus, we use at most $O(n^\omega)$ time on each node in the parse tree, and hence the total time is $O(n^\omega m)$. Since each matrix uses $O(n^2)$ \emph{bits}, the total space is $O(n^2m/w + n + m)$.

\section{The Improved Algorithm}\label{sec:improvedalgorithm}
We now present an extended regular expression matching algorithm that uses $O(n^\omega k + n^2m)$ time and $O(n^2 k/w + n + m)$ space. For simplicity we describe the algorithm using simple state-set transitions on the TNFAs. We will show how to improve the space and generalize it to use black-box implementations of fast (non-extended) regular expression matching algorithms in the following sections.

\subsection{Clustering}\label{sec:clustering}
Let $R$ be an extended regular expression of length $m$ containing $k$ extended operators. We will use a clustering of the parse tree $T(R)$ into $O(k)$ node-disjoint clusters such that each cluster contains at most one node labeled with an extended operator. While clustering techniques (also called \emph{modules}) are also in previous solutions~\cite{Yamamoto2001, YM2003}, our use of clustering in combination with match graphs and finite automaton simulation is new and differs significantly from previous approaches.

We define the set of \emph{extended nodes} $P$ to be the subset of nodes $v$ in $T(R)$ such that (i) $v$ is labeled by an extended operator or (ii) $v$ is the lowest common ancestor of two nodes that are labeled by an extended operator. The edges in $T(R)$ from a node $p$ in $P$ to any children of $p$ are \emph{external edges} and all other edges are \emph{internal edges}. Deleting all external edges partitions $T(R)$ into node-disjoint subtrees that we call \emph{clusters}, and the set of all these clusters is the \emph{cluster partition} $CS$ of $T(R)$. Contracting all internal edges induces a \emph{macro tree}, where each cluster is represented by a single node. We will use standard tree terminology on the macro tree and refer to clusters as child clusters, internal clusters, parent clusters, leaf clusters, etc. We can construct the clustering in a bottom-up traversal of $T(R)$ in $O(m)$ time. See Figure~\ref{fig:clustered_tree} for an illustration.

\begin{figure}[t]
    \centering
    \includegraphics[width=0.95\linewidth]{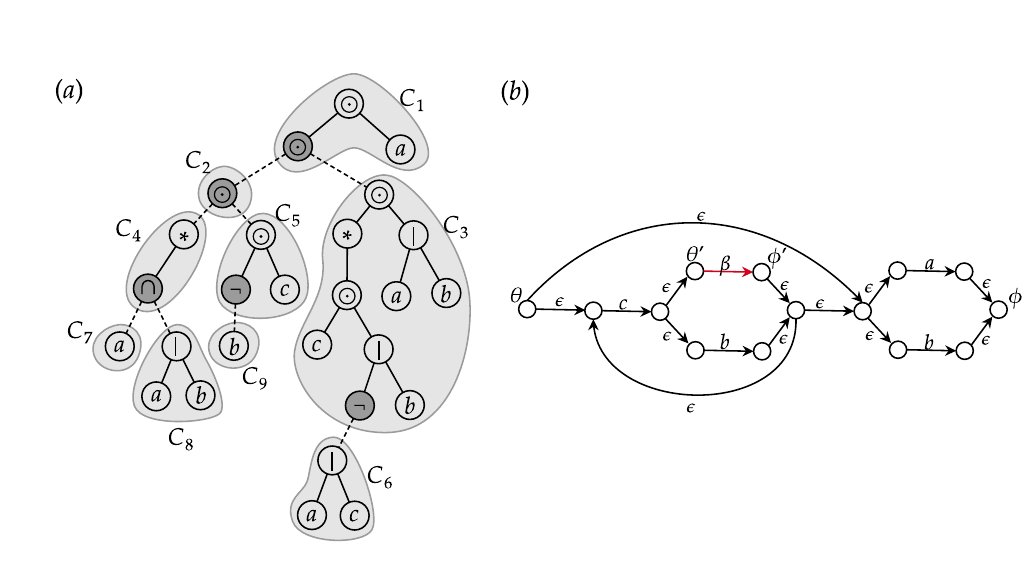}
    \caption{$(a)$ The clustering of the parse tree of an extended regular expression. The dark gray nodes are the nodes from $P$. The dashed lines are external edges and the solid lines are internal edges. $(b)$ The automaton $A_{C_3}$ corresponding to cluster $C_3$ in $(a)$.}
    \label{fig:clustered_tree}
\end{figure}

\begin{lemma}\label{lem:clustering}
        Let $R$ be a regular expression with $k$ extended operators. The cluster partition $CS$ of the parse tree of $R$ satisfies the following properties:
        \begin{itemize}
            \item[(i)] a leaf cluster in $CS$ contains no extended nodes, 
            \item[(ii)] each internal cluster in $CS$ contains exactly one extended node, and 
            \item[(iii)] the total number of clusters in $CS$ is $O(k)$. 
        \end{itemize}
\end{lemma}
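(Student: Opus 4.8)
The plan is to verify the three properties directly from the construction of the cluster partition, reasoning about the macro tree and the set $P$ of extended nodes. First I would record the basic structural fact: a cluster $C$ is obtained by deleting external edges, so $C$ is a maximal connected subtree whose internal nodes may only have children within $C$ when those children are reached by internal edges; the nodes of $P$ are precisely the nodes whose child-edges were cut. Consequently, a node of $P$ that lies in $C$ must be the unique ``top exit point'' of $C$ toward its child clusters, and the root of any non-root cluster is a child (via an external edge) of some node in $P$.

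For (i), a leaf cluster $C$ in the macro tree has no child clusters, hence no external edges leave it downward. If some node $v \in P$ were in $C$, then by definition its child-edges are external, so they were deleted; but $v$ is not a leaf of $T(R)$ (an extended node is either labeled by $\neg$ or $\cap$, which are internal, or is an $\lca$ of two such nodes, which is also internal), so deleting $v$'s child-edges either disconnects a nonempty subtree from $C$ — producing a child cluster and contradicting that $C$ is a leaf — or those subtrees belong to other clusters, again making $C$ a parent of a cluster. Either way $C$ is not a leaf. So a leaf cluster contains no extended node.

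For (ii), I would argue that each internal cluster contains \emph{at least} one extended node and \emph{at most} one. For ``at least one'': an internal cluster has a child cluster, whose root was separated by an external edge emanating from some node $p$; that $p$ lies in the parent cluster and $p \in P$ by definition of external edges. For ``at most one'': suppose a cluster $C$ contained two distinct nodes $p, q \in P$. Let $u = \lca(p,q)$. Since $C$ is a connected subtree, $u \in C$ and the tree paths $u \leadsto p$ and $u \leadsto q$ lie in $C$, so they contain no external edges. Now I distinguish cases on whether $p,q$ are labeled by extended operators or are $\lca$'s of such; in every case one can exhibit two nodes labeled by extended operators — namely $p$ and $q$ themselves if they carry extended labels, or witnesses below them otherwise — whose $\lca$ is $u$, forcing $u \in P$. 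But then the child-edges of $u$ along the paths toward $p$ and toward $q$ are external and would have been deleted, contradicting that $p$ and $q$ remain in the same cluster as $u$. Hence at most one extended node per cluster. The main obstacle here is the bookkeeping in this case analysis: when $p$ is merely an $\lca$ of two extended-operator nodes rather than an extended operator itself, I must carefully choose the witnesses below $p$ and $q$ so that their pairwise $\lca$ is exactly $u$ and not some node strictly below $u$; this requires picking, on each side of $u$, a witness descending into the appropriate subtree.

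For (iii), I would bound $|P|$ and then the number of clusters. The number of nodes labeled by an extended operator is exactly $k$. The set of $\lca$'s of pairs drawn from a fixed set of $k$ leaves-of-interest in a tree has size at most $k-1$ (a standard fact: closing a set of $k$ nodes under pairwise $\lca$ adds at most $k-1$ new nodes, since these are exactly the branching nodes of the minimal Steiner subtree). Hence $|P| \le 2k - 1 = O(k)$. Each cluster other than the root cluster is rooted at a child, via an external edge, of some node in $P$, and each node in $P$ has at most two children, so there are at most $2|P| + 1 = O(k)$ clusters in total. This completes the proof; the only genuinely delicate point remains the $\lca$ case analysis in part (ii).
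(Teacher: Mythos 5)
Your proof is correct and follows essentially the same route as the paper's: part (ii) rests on the same two facts (a connected cluster containing two nodes of $P$ must contain their $\lca$, which is itself in $P$, and a node of $P$ cannot share a cluster with any proper descendant), and part (i) is the same observation spelled out. The only difference is in (iii), where you bound $|P| \le 2k-1$ via the standard $\lca$-closure bound and count children of $P$-nodes directly, whereas the paper counts leaf clusters (at most $2k$, since their roots' parents are labeled by extended operators) and uses the binary macro tree; both yield the same $4k-1$ bound.
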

\begin{proof}
Let $P$ be the set of extended nodes in $T(R)$ and let $CS$ be the corresponding cluster partition. (i) By definition, a leaf cluster in $CS$ cannot contain an extended  node $p \in P$. (ii) Consider an internal cluster $C \in CS$. By definition, an internal cluster must contain at least one extended node. Note that an extended node cannot be in the same cluster as any of its descendants. Assume for contradiction that $C$ contains two distinct extended nodes $p_1, p_2 \in P$. Then, the lowest common ancestor $p_3 = \lca(p_1, p_2)$ is an extended node also in $C$ (if $p_3$ was not in $C$, then both $p_1$ and $p_2$ can't be in $C$). If $p_3 \neq p_1$, $p_3$ is a proper ancestor of $p_1$ and thus $p_3$ cannot be in $C$. Symmetrically, if  $p_3 \neq p_2$, $p_3$ can also not be in $C$. 
(iii) Consider the macro tree $M$. Then, $M$ has at most $2k$ leaf clusters, since the parents of the roots of the leaf clusters are all labeled by an extended operator. There can be at most $k-1$ extended nodes that do not correspond to extended operators. It follows that there are at most $2k-1$ internal clusters. In total, we have at most $4k-1$ clusters. 
\end{proof}

Let $CS$ be a cluster partition with extended nodes $P$ and let $C \in CS$ be a cluster with more than one node and root $v$. We define a corresponding TNFA, denoted $A_C$, as follows. If $C$ is a leaf cluster, then $A_C$ is the TNFA of the corresponding subexpression $R(v)$. If $C$ is an internal cluster, we construct the TNFA corresponding to the parse tree defined by $C$, where we have replaced the single extended node $p \in P$ by a special character $\beta \not\in \Sigma$. The extended node $p$ corresponds to an \emph{extended start state} and \emph{extended end state}, denoted $\theta'$ and $\phi'$, respectively, in $A_C$. The transition  from $\theta'$ to $\phi'$ labeled $\beta$ is called the \emph{extended transition}. See Figure~\ref{fig:clustered_tree}(b). Note that by Thompson's construction, $A_C$ has $O(|C|)$ states and transitions. 

Intuitively, the extended transition represents the extended regular expression $R(p)$.
Let $Q$ be a string that matches the regular expression $R(v)$. Then, there is a decomposition of $Q$ into substrings $Q= Q_1 \odot\tilde{Q}_1\odot Q_2 \odot\tilde{Q}_2 \odot\cdots \odot Q_{\ell-1} \odot\tilde{Q}_{\ell-1}\odot Q_\ell$, such that $\tilde{Q}_i$ matches the extended regular expression $R(p)$ and $Q_i$ matches parts of the non-extended part of the regular expression $R(v)$ for all $i$. If we replace each $\tilde{Q}_i$ with $\beta$,  we obtain a string $Q'= Q_1 \odot\beta\odot Q_2 \odot\beta \odot\cdots \odot\beta\odot Q_\ell$ that is accepted by $A_C$.  

\begin{example}\label{ex:substringdecomp} Let $v$ be the root of cluster $C_3$ in Figure~\ref{fig:clustered_tree}(a) and let $Q=cabcbba$.  There are several ways to match $Q$ to the regular expression $R(v)$. One such way is to match the two substrings $Q[2,3] = ab$ and $Q[5,6] = bb$ to $R(p) = \neg (a \mid c)$. Replacing these two substrings with $\beta$ we get $Q' = c\beta c \beta a$ which is accepted by the automaton in Figure~\ref{fig:clustered_tree}(b).
\end{example}

\subsection{General Match Graphs}\label{sec:matchgraphsgeneral}
In this section we generalize our match graphs from Section~\ref{sec:matchgraphs} and give efficient operations on them to implement our algorithm. Given a TNFA $A$ and a pair of states $(s, t)$, the \emph{match graph} for $A$, $s$, and $t$, denoted $G(A,s,t)$, is the directed graph on $n+1$ vertices $g_1, \ldots g_{n+1}$ such that there is an edge from $g_i$ to $g_j$ iff there is a path from $s$ to $t$ in $A$ matching $Q[i,j-1]$, $1 \leq i \leq j \leq n+1$. As before, if there is a path from $s$ to $t$ matching the empty string, every vertex in $G(A,s,t)$ has a self-loop. If $s=t$, then every vertex in $G(A,s,t)$ also has a self-loop.

Given match graphs $G$ and $F$, we define match graphs combining $G$ and $F$ corresponding to the extended regular expression operators, namely the match graphs $G\odot F$, $G\mid F$, $G\cap F$, $\neg G$, and $G^\ast$. All of these are on the set of vertices $g_1, \ldots, g_{n+1}$. We define the edges as follows. 
\begin{itemize}
    \item For $G \odot F$, we add an edge $(g_i, g_j)$ for each pair $i, j$ such that there exist edges $(g_i, g_l)$ in $G$ and $(g_l, g_j)$ in $F$.
    \item For $G \mid F$, we add an edge $(g_i, g_j)$  for each pair $i,j$ if $(g_i, g_j)$ in $G$ or $(g_i, g_j)$ in $F$.
    \item For $G \cap F$, we add an edge $(g_i, g_j)$ for each pair $i,j$ if $(g_i, g_j)$ in $G$ and $(g_i, g_j)$ in $F$.
    \item For $\neg G$, we add an edge $(g_i, g_j)$ for each edge not in $G$.
    \item For $G^*$, we add an edge $(g_i, g_j)$ for each pair $i,j$ with a directed path from $g_i$ to $g_j$ in $G$ and add self-loops to all vertices.
\end{itemize}

As in Section~\ref{sec:dynamicprogramming}, we can represent match graphs as boolean matrices and implement the operations efficiently. 
\begin{lemma}\label{lem:matchgraphoperations}
    Let $G$ and $F$ be match graphs on a string of length $n$. Then, we can store them in $O(n^2/w + 1) $ space and compute
    \begin{itemize}
    \item[(i)] $G \odot F$ and $G^\ast$ in $O(n^\omega)$ time, and 
    \item[(ii)] $G \mid F$, $G \cap F$, and $\neg G$ in $O(n^2)$ time.
    \end{itemize}
\end{lemma}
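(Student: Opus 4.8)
The plan is to represent each match graph by its boolean $(n+1)\times(n+1)$ adjacency matrix, packed $w$ bits to a machine word, and to realize each of the five combination operations as a standard operation on these matrices, exactly as in the implementation of the dynamic programming algorithm in Section~\ref{sec:dynamicprogramming}. For the storage bound, a match graph on $n+1$ vertices has $(n+1)^2$ boolean adjacency entries, which fit into $\lceil (n+1)^2/w\rceil = O(n^2/w + 1)$ words; the additive constant (and the $+1$ in the bound) covers the regime where $n$ is small relative to $w$. For the $O(n^2)$ operations in part (ii), the definitions are entrywise: $G\mid F$ is the entrywise OR of the two matrices, $G\cap F$ is the entrywise AND, and $\neg G$ is the entrywise complement (which, by definition, also flips self-loops). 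Each is trivially computed in $O(n^2)$ time (in fact $O(n^2/w)$ with word-parallel bitwise operations, but $O(n^2)$ is all we need), so I would simply state this.

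For part (i), I would argue that $G\odot F$ is exactly the boolean matrix product of the adjacency matrices of $G$ and $F$: by definition entry $(i,j)$ is $1$ iff there is an $l$ with $(g_i,g_l)$ in $G$ and $(g_l,g_j)$ in $F$, which is precisely $\bigvee_l G[i,l]\wedge F[l,j]$. I would note that self-loops are handled automatically by this product, e.g. a self-loop at $g_l$ in $G$ contributes the ``empty match from $G$, then $Q[l,j-1]$ from $F$'' case, consistent with the intended semantics, and symmetrically for $F$. Boolean matrix multiplication of $(n+1)\times(n+1)$ matrices reduces to integer matrix multiplication: multiply the $0/1$ matrices over the integers, where all intermediate entries are at most $n+1$ and hence fit in $O(\log n)=O(w)$ bits, then threshold every nonzero entry to $1$; this runs in $O(n^\omega)$ time. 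For $G^*$, the definition puts an edge $(g_i,g_j)$ exactly when $G$ has a directed path from $g_i$ to $g_j$, with all self-loops then added; this is the reflexive transitive closure of $G$, computable in $O(n^\omega)$ time by the classical reduction of transitive closure to matrix multiplication~\cite{FM1971}, after which setting the diagonal to $1$ costs an additional $O(n)$ time.

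The main ``obstacle'' is essentially bookkeeping rather than a real difficulty: one must check that the intended match-graph semantics, and in particular the treatment of $\epsilon$-matches (i.e., self-loops) under $\odot$ and $^*$, coincides precisely with the algebraic boolean matrix product and transitive closure, and that the standard word-RAM assumption $w=\Omega(\log n)$ makes the integer-matrix-multiplication reduction run in $O(n^\omega)$ time. No new idea is needed; the argument is a direct reprise of Section~\ref{sec:dynamicprogramming} stated for general match graphs.
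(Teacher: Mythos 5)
Your proposal is correct and follows essentially the same route as the paper: store each match graph as a packed boolean $(n+1)\times(n+1)$ adjacency matrix, realize $G\odot F$ as boolean matrix multiplication and $G^\ast$ as transitive closure (both in $O(n^\omega)$ time via the classical reductions), and compute the remaining operations entrywise in $O(n^2)$ time. The paper's proof is just a terser version of your argument; your extra care about self-loops and the integer-matrix-multiplication reduction is consistent with, and fills in, what the paper leaves implicit.
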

\begin{proof}
As in the dynamic programming algorithm from Section~\ref{sec:dynamicprogramming}, we store a match graph as a boolean matrix in $O(n^2)$ bits. (i) The operation $G \odot F$ is a boolean matrix multiplication and $G^\ast$ is the transitive closure of a matrix. Both use $O(n^\omega)$ time. (ii) $G \mid F$, $G \cap F$, and $\neg G$ are straightforward to implement in constant time for each entry.  
\end{proof}
We will also need to efficiently construct match graphs for TNFA. 
\begin{lemma}\label{lem:tnfamatchgraph}
    Let $A$ be a TNFA of size $m$. Given a pair of states $(s,t)$, we can construct the match graph $G(A, s, t)$ for a string of length $n$ in $O(n^2 m)$ time and $O(n^2/w + m)$ space.  
\end{lemma}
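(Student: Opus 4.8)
The plan is to compute the match graph $G(A,s,t)$ column by column over the string positions, reusing the standard state-set simulation of a TNFA. Fix a start position $i \in \{1,\dots,n+1\}$. We want to determine, for every $j \ge i$, whether some path from $s$ to $t$ in $A$ spells out $Q[i,j-1]$. First I would run the state-set transition algorithm of Section~\ref{sec:preliminaries} starting from the singleton set $\{s\}$: set $S_i = \delta_A(\{s\},\epsilon)$, and then $S_{j} = \delta_A(S_{j-1}, Q[j-1])$ for $j = i+1, \dots, n+1$. By the correctness of state-set transitions, $t \in S_j$ iff there is a path from $s$ to $t$ matching $Q[i,j-1]$, so we add the edge $(g_i,g_j)$ exactly when $t \in S_j$. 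This handles the empty-string and $s=t$ conventions automatically, since $\delta_A$ follows $\epsilon$-transitions (and if $s=t$ then $t \in S_i$, giving all self-loops). Each state-set transition on a single character takes $O(m)$ time on a TNFA via breadth-first search, and for a fixed $i$ we perform $O(n)$ of them, for $O(nm)$ time per start position; summing over the $n+1$ choices of $i$ gives $O(n^2 m)$ time total.

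For the space bound, the key observation is that we never need to store more than one column of the match graph at a time during the simulation: while processing start position $i$ we only maintain the current state-set $S_j$ (of size $O(m)$) and write the bits of column $i$ into the boolean-matrix representation of $G(A,s,t)$. The matrix itself occupies $O(n^2)$ bits, i.e.\ $O(n^2/w + 1)$ machine words, the automaton $A$ and its adjacency structure take $O(m)$ space, and the working state-set and BFS queue take $O(m)$ additional space. Hence the total working space is $O(n^2/w + m)$, and after all $n+1$ sweeps the matrix holds exactly the adjacency matrix of $G(A,s,t)$.

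The main thing to be careful about — rather than a deep obstacle — is the handling of $\epsilon$-transitions and the empty-string/self-loop conventions: one must take $\epsilon$-closures at every step (built into the definition of $\delta_A$) so that, e.g., an all-$\epsilon$ path from $s$ to $t$ correctly produces self-loops at every vertex, and one must be consistent about whether the simulation starts from $\{s\}$ or from $\delta_A(\{s\},\epsilon)$. Since the TNFA has only $O(m)$ states and $O(m)$ transitions and each BFS-based closure is linear, none of this affects the asymptotic bounds; correctness then follows directly from the correctness of the state-set simulation stated in Section~\ref{sec:preliminaries} together with the definition of $G(A,s,t)$.
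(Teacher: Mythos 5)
Your proposal is correct and matches the paper's proof essentially verbatim: for each start position $i$ you run the standard state-set simulation from $s$ and record an edge $(g_i,g_j)$ whenever $t$ appears in the current state-set, giving $O(n^2m)$ time and $O(n^2/w+m)$ space. Your explicit initialization with the $\epsilon$-closure $\delta_A(\{s\},\epsilon)$ is a slightly more careful treatment of the empty-substring/self-loop convention than the paper's phrasing, but the argument is the same.
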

\begin{proof}
    Let $Q$ be a string of length $n$. For each suffix $Q[i, n]$ of $Q$, we perform a sequence of state-set transitions to find all substrings $Q[i,j]$ that match a path from $s$ to $t$. To do so, we compute a sequence of state-sets $S_i,\ldots, S_n$,  such that $S_i = \delta_A(s, Q[i])$ and $S_{h} = \delta_A(S_{h-1}, Q[h])$, for $h = i+1,\ldots, n$. We have that the state-set $S_h$ contains state $t$ iff there is a path from $s$ to $t$ matching $Q[i,h]$ and thus this finds all matching substrings starting at position $i$. We can thus construct the corresponding match graph from this procedure.  
    
    Since a state-set transition on a single character takes $O(m)$ time, we process suffix $i$ in $O((n-i+1)m)$ time. In total, we use $O(n^2 m)$ time. Since we only need to store the match graph and a constant number of state-sets during the algorithm, we use $O(n^2/w + m)$ space.  
\end{proof}

\subsection{Algorithm}\label{sec:improvedalgorithmsub}
We are now ready to describe the improved algorithm. 
We first construct the parse tree $T(R)$ and the clustering according to Lemma~\ref{lem:clustering}. To speed up the dynamic programming algorithm of Section~\ref{sec:dynamicprogramming}, we will show how to efficiently compute the match graph of the root of each cluster in a bottom-up traversal of the macro tree. 

\paragraph{High level idea} The main idea is the following. As described in Section~\ref{sec:clustering} if a string $Q$ matches $R(v)$ then $Q$ can be partitioned into substrings $Q= Q_1 \odot\tilde{Q}_1\odot Q_2 \odot\tilde{Q}_2 \odot\cdots \odot Q_{\ell-1} \odot\tilde{Q}_{\ell-1}\odot Q_\ell$, such that $\tilde{Q}_i$ matches $R(p)$ for all $i$, and the string $Q' = Q_1\odot \beta \odot Q_2\odot \beta \odot \cdots \odot Q_{\ell-1} \odot \beta \odot Q_\ell$ is accepted by $A_C$. Furthermore, the string $Q_1$ must match a path from $\theta$ to $\theta'$ in $A_C$, $Q_\ell$ must match a path from $\phi'$ to $\phi$, and for all other~$i$, $Q_i$ must match a path from $\phi'$ to $\theta'$. 

\begin{example}\label{ex:paths}
     Consider the example from Example~\ref{ex:substringdecomp} and the automaton from Figure~\ref{fig:clustered_tree}(b). Here $Q'=c\beta c\beta a$ and $Q_1 = c$ matches a path from $\theta$ to $\theta'$, $Q_2 = c$ matches a path from $\phi'$ to $\theta'$, and $Q_3 = a$ matches a path from $\phi'$ to $\phi$.
\end{example}
We therefore compute the match graphs $G(A_C,\theta,\theta'), G(A_C,\phi',\theta')$, and  $G(A_C, \phi',\phi)$. 
To include the case where the parsing of $Q$ does not use the subexpression $R(p)$ we also compute the match graph $G(A_C, \theta,\phi)$. 
We then combine these match graphs with the match graph $G(p)$ using the match graph operations from the previous section which allows us to write the combinations of the match graphs as a regular expression of match graphs (see Figure~\ref{fig:paths_in_A_C}(a)). The intuition is that there  will be an edge $(l,r)$ in the match graph $G(v)$  if  $(l,r)$ is an edge in $G(A_C, \theta,\phi)$,  or if there exists a sequence of edges 
$(l_1,r_1),(\tilde{l}_1, \tilde{r}_1), (l_2,r_2),(\tilde{l}_2, \tilde{r}_2),\ldots, (\tilde{l}_{\ell-1}, \tilde{r}_{\ell-1}),(l_\ell,r_\ell)$
such that  $l_1 = l$, $r_\ell = r$, $\tilde{l}_i = r_i$ for all $i$,  and $l_i = \tilde{r}_{i-1}$ for all  $i \neq 1$. Furthermore, $(l_1,r_1) \in G(A_C,\theta,\theta')$, $(l_{\ell}, r_\ell) \in G(A_C, \phi',\phi)$, $(\tilde{l}_i, \tilde{r}_i) \in G(p)$ and $(l_i, r_i) \in G(A_C,\phi',\theta')$ for $i \not\in \{1,\ell\}$. See Figure~\ref{fig:paths_in_A_C}(b) for an example.

\paragraph{Algorithm} We now present the full algorithm. First, construct the parse tree $T(R)$,  the clustering according to Lemma~\ref{lem:clustering}, the corresponding macro tree $M$ of the clusters $CS$, and the set of extended nodes $P$. We traverse the macro tree bottom-up. At each cluster $C$ with root $v$, we compute the match graph $G(v)$ as follows. If $C$ is a leaf cluster, $C$ contains no extended node. We directly construct the TNFA $A_C$ and compute the match graph $G(v) = G(A_C, \theta, \phi)$ according to Lemma~\ref{lem:tnfamatchgraph}.  Otherwise, $C$ is an internal cluster with a single extended node $p \in P$. We perform the following steps.



\paragraph{Step 1: Compute $G(p)$.} We first compute $G(p)$. Since we compute the match graphs of the clusters bottom up we have already computed the match graphs of $p$'s children in the parse tree, as these will be the roots of the child clusters of $C$ in the macro tree. 
There are two cases. 
            \begin{description}
                \item[Case (i): $p$ has a single child $u$.]
                From the clustering, it follows that $p$ is labeled $\neg$. We compute $G(p)$ from $G(u)$ as described in Lemma~\ref{lem:matchgraphoperations}.
                \item[Case (ii): $p$ has two children $u$ and $w$.] We compute $G(p)$ by composing the match graphs of $G(u)$ and $G(w)$ according to the label of $p$ as described in Lemma~\ref{lem:matchgraphoperations}. 
            \end{description}
%
%
If $v = p$ ($C$ is a single-node cluster), then $G(v) = G(p)$ and we are done.

\begin{description}
        \item[Step 2:] We construct the TNFA $A_C$ with extended start state $\theta'$ and extended end state $\phi'$ and compute the following match graphs 
        \begin{align*}
            G_{\theta,\phi} &= G(A_C, \theta,\phi) \\
            G_{\theta,\theta'} &= G(A_C,\theta,\theta') \\
            G_{\phi',\theta'} &= G(A_C,\phi',\theta') \\
            G_{\phi',\phi} &= G(A_C, \phi',\phi)
        \end{align*}
                    
    \item[Step 3:] Compute and return the match graph $$G(v)= G_{\theta,\phi}|\left(G_{\theta,\theta'} \odot G(p) \odot ( G_{\phi',\theta'}\odot G(p))^* \odot  G_{\phi',\phi}  \right)$$  
\end{description}

\begin{figure}[t]
    \centering
    \includegraphics[scale = 0.8]{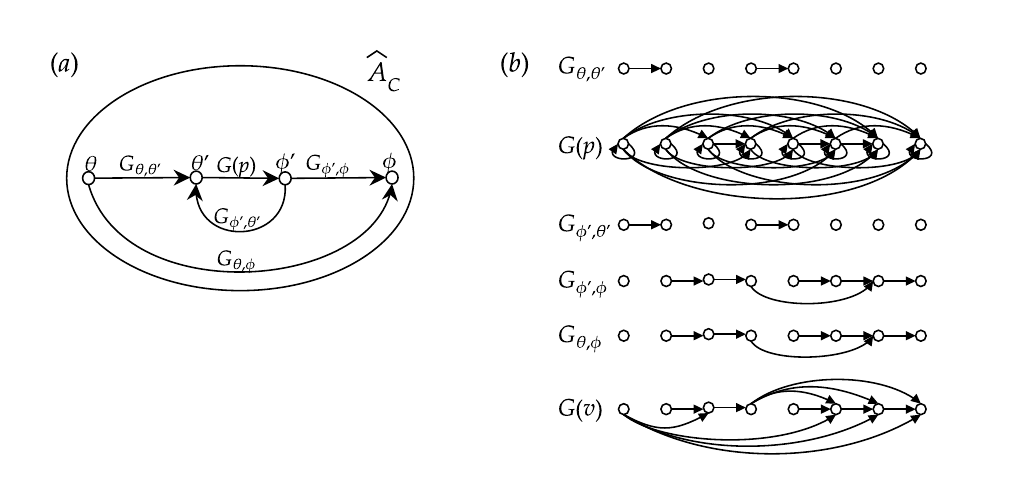}
    \caption{$(a)$ An illustration of the matching paths represented by the five match graphs used in step 3 of the algorithm. $(b)$ The match graphs computed for cluster $C_3$ in Figure~\ref{fig:clustered_tree} and the string $Q= cabcbba$ from Example~\ref{ex:substringdecomp} and~\ref{ex:paths}.}
    \label{fig:paths_in_A_C}
\end{figure}

\subsection{Correctness}
Let $G$ be the match graph for cluster $C$ returned by the algorithm. We will prove that $G=G(v)$.

If $C$ is a leaf cluster, then the correctness follows immediately from the correctness of the state-set transition algorithm in Lemma~\ref{lem:tnfamatchgraph}. Let $C$ be an internal cluster and assume by induction that we have correctly computed the match graphs of the child clusters of $C$. It then follows from Lemma~\ref{lem:matchgraphoperations} that $G(p)$ is the match graph of $R(p)$ (Step 1). If $C$ only contains a single node $v$, then $v = p$ and the correctness follows. Assume $|C|> 1$. The correctness of the match graphs in Step 2 follows from Lemma~\ref{lem:tnfamatchgraph}. 
That $G=G(v)$ now follows from the following lemma.


\begin{lemma}\label{lemma_correcness}
    Let $C$ be a cluster with more than one node, and root node $v$. Then $G=G_{\theta,\phi}|(G_{\theta,\theta'} \odot G(p) \odot (G_{\phi',\theta'}\odot G(p))^* \odot  G_{\phi',\phi}  )$ is the match graph of the cluster $C$, that is, $G=G(v)$.
\end{lemma}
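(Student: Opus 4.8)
The plan is to prove both inclusions between the edge set of $G$ and the edge set of $G(v)$, where $G(v)$ is the match graph of $R(v)$. Recall that $R(v)$ is obtained from the cluster $C$ by substituting the extended node $p$ back in place of the special character $\beta$. The automaton $A_C$ accepts exactly the strings over $\Sigma\cup\{\beta\}$ spelled out along paths from $\theta$ to $\phi$; by Thompson's construction every occurrence of $\beta$ on such a path is traversed via the unique extended transition $\theta'\xrightarrow{\beta}\phi'$. Hence any path from $\theta$ to $\phi$ in $A_C$ decomposes uniquely as: a prefix path $\theta\rightsquigarrow\theta'$, then the extended transition, then some number $\ell\ge 0$ of ``loop'' segments each of the form ($\phi'\rightsquigarrow\theta'$, extended transition), and finally a suffix path $\phi'\rightsquigarrow\phi$ — \emph{unless} the path never uses the extended transition at all, in which case it is a pure $\theta\rightsquigarrow\phi$ path avoiding $\beta$. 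The first case is exactly what the $G_{\theta,\theta'}\odot G(p)\odot(G_{\phi',\theta'}\odot G(p))^*\odot G_{\phi',\phi}$ term encodes (each block of consecutive $\Sigma$-labelled transitions becomes a factor $G_{\cdot,\cdot}$, and each extended transition becomes a factor $G(p)$, since by induction $G(p)$ is the match graph of $R(p)$, i.e. the set of substrings matching whatever $\beta$ ``stands for''), and the second case is exactly the $G_{\theta,\phi}$ term.

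Concretely, I would argue the \emph{forward} direction first: suppose $(g_i,g_j)$ is an edge of $G(v)$, so $R(v)$ matches $Q[i,j-1]$. Unfolding the semantics of the substitution, a match of $Q[i,j-1]$ against $R(v)$ corresponds to a $\theta$-to-$\phi$ path in $A_C$ together with, at each traversal of the extended transition, a substring of $Q$ matching $R(p)$, such that all these substrings concatenated in order (interleaved with the $\Sigma$-labelled transition labels) spell $Q[i,j-1]$. Using the unique decomposition above with $\ell$ loop segments, pick the split points $i=i_0\le i_1\le \cdots$ of $Q[i,j-1]$ accordingly: the prefix piece witnesses an edge of $G_{\theta,\theta'}$, each ``$R(p)$'' piece witnesses an edge of $G(p)$, each $\phi'\rightsquigarrow\theta'$ piece witnesses an edge of $G_{\phi',\theta'}$, and the suffix piece witnesses an edge of $G_{\phi',\phi}$; composing these via the definition of $\odot$ and of $(\cdot)^*$ (which by Lemma~\ref{lem:matchgraphoperations} is transitive closure, so $\ell$-fold composition is subsumed) puts $(g_i,g_j)$ into the right-hand disjunct. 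If the path avoids the extended transition, the whole of $Q[i,j-1]$ witnesses an edge of $G_{\theta,\phi}$. The \emph{reverse} direction is the same chain of equivalences read backwards: an edge in $G_{\theta,\theta'}\odot G(p)\odot(G_{\phi',\theta'}\odot G(p))^*\odot G_{\phi',\phi}$ unpacks, via the definitions of $\odot$ and transitive closure, into a sequence of intermediate vertices $g_i=g_{i_0},g_{i_1},\ldots,g_{i_{2\ell+2}}=g_j$ with consecutive pairs being edges of the respective factor match graphs; concatenating the witnessing paths/$R(p)$-matches yields a $\theta$-to-$\phi$ path in $A_C$ with $\ell$ extended-transition traversals matching $Q[i,j-1]$, i.e. a match of $R(v)$, and the $G_{\theta,\phi}$ disjunct is handled directly. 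Throughout, the $\epsilon$/self-loop conventions (a self-loop at $g_i$ whenever the relevant sub-object matches $\epsilon$, and the $\ell=0$ case giving $G_{\theta,\theta'}\odot G(p)\odot G_{\phi',\phi}$) need to be checked to line up, but these are exactly matched by the corresponding conventions in Section~\ref{sec:matchgraphsgeneral} and in the definition of $G(A,s,t)$.

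I expect the main obstacle to be stating and justifying the \emph{unique decomposition} of $\theta$-to-$\phi$ paths in $A_C$ with respect to the extended transition cleanly — in particular, arguing that every traversal of a $\beta$-labelled edge is a traversal of \emph{the} extended transition $\theta'\to\phi'$ (immediate, since $A_C$ is built by Thompson's rules from a parse tree with a single $\beta$-leaf, so there is exactly one $\beta$-transition), and that between consecutive $\beta$-traversals the path genuinely goes from $\phi'$ back to $\theta'$. The second point requires the observation that after the extended transition the automaton is at $\phi'$ and before the next one it must be at $\theta'$, which is just ``read the path between the two $\beta$-edges''; the only subtlety is handling $\epsilon$-transitions, which is why the match-graph definition was set up to put self-loops whenever $\epsilon$ is matched (so a $\phi'\rightsquigarrow\theta'$ path matching $\epsilon$ still contributes an edge, namely a self-loop, to $G_{\phi',\theta'}$). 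Once this structural lemma about paths in $A_C$ is in hand, both inclusions follow by a routine but careful translation between ``path in $A_C$ decorated with $R(p)$-matches'' and ``product of edges in the factor match graphs,'' using only the definitions in Section~\ref{sec:matchgraphsgeneral} and the inductive hypothesis that $G(p)$ is the match graph of $R(p)$.
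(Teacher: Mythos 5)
Your proposal is correct and follows essentially the same argument as the paper: split on whether the $\theta$-to-$\phi$ path uses the extended transition, and in the latter case decompose the path at the visits to $\theta'$ and $\phi'$ so that the alternating segments witness edges of $G_{\theta,\theta'}$, $G(p)$, $G_{\phi',\theta'}$, and $G_{\phi',\phi}$, with the reverse inclusion obtained by reassembling such edges into a matching path. The only cosmetic difference is that the paper packages the ``path in $A_C$ decorated with $R(p)$-matches'' as a single auxiliary automaton $\widehat{A}_C$ in which each matching substring of $R(p)$ is materialized as a $\theta'$-to-$\phi'$ path, whereas you keep the path and the $R(p)$-match assignments separate.
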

\begin{proof}
We will show that  $(g_l,g_r) \in G$ if and only if $Q[l,r-1]$ matches $R(v)$. We first define an automaton $\widehat{A}_C$ used only for the proof. 
The automaton $\widehat{A}_C$ contains the same states and transitions as the TNFA $A_C$, in addition to the following paths from $\theta'$ to $\phi'$: For each interval $[i,j-1]$ of $Q$ that matches $R(p)$, we add a path from $\theta'$ to $\phi'$ with transition labels concatenating to $Q[i,j-1]$. Thus, $\widehat{A}_C$ accepts a string from the alphabet $\Sigma$ if and only if it matches $R(v)$. Recall that the special character $\beta \notin \Sigma$, so the extended transition is never used.
Note, that $A_C$ is a subgraph of $\widehat{A}_C$.

First we show that $(g_l,g_r)\in G$ implies that $Q[l,r-1]$ matches $R(v)$.
Assume $(g_l,g_r)\in G$. Then $(g_l,g_r)$ is either in $G_{\theta,\phi}$, in $G_{\theta,\theta'}\odot G(p)\odot (G_{\phi',\theta'}\odot G(p))^* \odot G_{\phi',\phi}$ or in both. 
%
Suppose first that $(g_l,g_r) \in G_{\theta,\phi}$. Then there is a path in $A_C$ from $\theta$ to $\phi$ matching $Q[l,r-1]$, so $Q[l,r-1]$ matches $R(v)$. 
%
Otherwise, we have that $(g_l,g_r)\in G_{\theta,\theta'}\odot G(p)\odot (G_{\phi',\theta'}\odot G(p))^* \odot G_{\phi',\phi}$. 
Then there is a sequence of edges 
\[(g_{l_1},g_{r_1}),(g_{\tilde{l}_1}, g_{\tilde{r}_1}), (g_{l_2},g_{r_2}),(g_{\tilde{l}_2}, g_{\tilde{r}_2}),\ldots, (g_{\tilde{l}_{\ell-1}}, g_{\tilde{r}_{\ell-1}}),(g_{l_\ell}g_{r_\ell})\]
where  $l_1 = l$, $r_\ell = r$, $\tilde{l}_i = r_i$ for all $i$,  and $l_i = \tilde{r}_{i-1}$ for all  $i \neq 1$, such that $(g_{l_1},g_{j_1}) \in G(A_C,\theta,\theta')$, $(g_{l_{\ell}}, g_{r_\ell}) \in G(A_C, \phi',\phi)$, $(g_{\tilde{l}_i}, g_{\tilde{r}_i}) \in G(p)$ and $(g_{l_i}, g_{r_i}) \in G(A_C,\phi',\theta')$ for $i \not\in \{1,\ell\}$.
The edge $(g_{l_1},g_{r_1})$ corresponds to a path in $A_C$ matching $Q[l_1,r_1-1] = Q[l, r_1-1]$ and the edge $(g_{l_\ell},g_{r_\ell})$ corresponds to a path in $A_C$ matching $Q[l_\ell,r_\ell-1] = Q[l_\ell, r-1]$.
Each of the edges $(g_{l_i},g_{r_i})$, for $1< i < \ell$, corresponds to a path in $A_C$ matching $Q[l_i,r_i-1]$, and each of the edges $(g_{\tilde{l}_i},g_{\tilde{r}_i})$, for $1\leq i < \ell$, corresponds to a path in $\widehat{A}_C$ from $\theta'$ to $\phi'$ matching $Q[\tilde{l}_i,\tilde{r}_i-1]$. 
Note, that any of the edges $(g_{l_i},g_{r_i})$ or $(g_{\tilde{l}_i},g_{\tilde{r}_i})$ could be loops, in which case the labels on the corresponding path only contains transitions labeled $\epsilon$.
The concatenation of these paths is a path from $\theta$ to $\phi$ in $\widehat{A}_C$, that matches $Q[l,r-1]$. Thus $Q[l,r-1]$ matches $R(v)$.


Next, we show  that if $Q[l,r-1]$ matches $R(v)$ then $(g_l,g_r) \in G$. Let $\Gamma$ be the matching path of 
$Q[l,r-1]$ from $\theta$ to $\phi$ in $\widehat{A}_C$.
Suppose that $\theta'\not\in \Gamma$. Then $\Gamma$ is fully contained in $A_C$, and thus $(g_l,g_r)\in G_{\theta,\phi}$ by the construction of $G_{\theta,\phi}$. 
%
Otherwise, we decompose $\Gamma$, splitting it into subpaths \[\Gamma_1,\tilde{\Gamma}_1, \Gamma_2, \tilde{\Gamma}_2,\ldots, \tilde{\Gamma}_{\ell-1}, \Gamma_\ell\] at every occurrence of $\theta'$ and~$\phi'$.
    Note that if $\theta = \theta'$, then subpath $\Gamma_1$ is empty, and likewise, if $\phi = \phi'$, then $\Gamma_\ell$ is empty.
    This decomposition of the path $\Gamma$  corresponds to a decomposition of the string $Q[l, r-1]$ into substrings: 
    \[Q_1,\tilde{Q}_1, Q_2, \tilde{Q}_2, \ldots, \tilde{Q}_{\ell-1}, Q_\ell\]
    such that $Q[l,r-1] = Q_1\odot\tilde{Q}_1\odot \cdots \odot \tilde{Q}_{\ell-1}\odot Q_\ell$.
    Note, that $Q_i$ (resp.\ $\tilde{Q}_i$) is the empty string, if either the subpath $\Gamma_i$ (resp.\ $\tilde{\Gamma}_i$) is empty (can happen for $\Gamma_1$ and $\Gamma_\ell$) or if it only contains transitions labeled with $\epsilon$. Let $Q_i = Q[l_i,r_i]$ and $\tilde{Q}_i = Q[\tilde{l}_i,\tilde{r}_i]$. 
    Since $\Gamma_1$ is a path in $A_C$ from $\theta$ to $\theta'$ then $(g_{l_1},g_{r_1})\in G_{\theta,\theta'}$. Similarly, since $\Gamma_\ell$ is a path in $A_C$ from $\phi'$ to $\phi$ then $(g_{l_\ell},g_{r_{\ell}})\in G_{\phi',\phi}$. For $1 \leq i < \ell$,  $\tilde{\Gamma}_i$ is a path from $\theta'$ to $\phi'$ in $\hat{A}_C$. Note, that the only edges going out of $\theta'$ in $\hat{A}_C$ are the added paths matching $R(p)$, and the extended transition which is never used.
    Thus, $\tilde{\Gamma}_i$ is a path in $\hat{A}_C$ matching $R(p)$, so $(g_{\tilde{l}_i}, g_{\tilde{r}_i})\in G(p)$. For $1 < i < \ell$, $\Gamma_i$ is a path from $\phi'$ to $\theta'$, in which case, $(g_{l_i},g_{r_{i}})\in G_{\phi',\theta'}$. 
    It follows that, $(g_l, g_r) = (g_{l_1},g_{r_{\ell}})\in G_{\theta,\theta'} \odot G(p) \odot (G_{\phi',\theta'}\odot G(p))^* \odot  G_{\phi',\phi} $.

    It follows that  $(g_l, g_r) \in G_{\theta,\phi}\mid (G_{\theta,\theta'} \odot G(p) \odot (G_{\phi',\theta'}\odot G(p))^* \odot  G_{\phi',\phi}) = G$.
    \end{proof}

\subsection{Analysis}
We compute the parse tree and clustering in $O(m)$ time. Consider the time spent at each cluster $C$ with $m_C$ nodes. For case 1 ($C$ is a leaf cluster) we use $O(n^2m_C)$ by Lemma~\ref{lem:tnfamatchgraph}. For case 2 ($C$ is an internal cluster),   by Lemma~\ref{lem:matchgraphoperations} we use $O(n^\omega)$ time to compute $G(p)$ in step 1, and  by Lemma~\ref{lem:tnfamatchgraph} we use $O(n^2m_C)$ time to compute the match graphs $G_{\theta,\phi}$, $G_{\theta,\theta'}$, $G_{\phi',\theta'}$, and $G_{\phi',\phi}$. Finally, we use $O(n^\omega)$ time to compute $G(v)$ by Lemma~\ref{lem:matchgraphoperations}. Thus, we use $O(n^\omega + n^2m_C)$ at cluster $C$. In total, the algorithm uses $O(\sum_{C\in CS}(n^\omega + n^2m_C))=O(n^\omega k + n^2m)$ time by Lemma ~\ref{lem:clustering}. We store a constant number of match graphs for each of the $O(k)$ clusters. In total, we use $O(n^2 k/w + n + m)$ space. In summary, we have the following result. 

\begin{lemma}\label{lem:simplemain}
    Given an extended regular expression $R$ of length $m$ containing $k$ extended operators and a string $Q$ of length $n$, we can solve the extended regular expression matching problem for $R$ and $Q$ in space $O(\frac{n^2 k}{w} + n + m)$ and time $O(n^\omega k + n^2m)$.
\end{lemma}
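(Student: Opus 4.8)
The plan is to package the algorithm of Section~\ref{sec:improvedalgorithmsub} together with its correctness and per-cluster cost analysis into a single bound. First I would construct the parse tree $T(R)$ and the cluster partition $CS$ in $O(m)$ time, so that by Lemma~\ref{lem:clustering} there are $O(k)$ clusters, each internal cluster has exactly one extended node, and each leaf cluster has none. I would then perform a bottom-up traversal of the macro tree, computing the match graph $G(v)$ of the root $v$ of each cluster $C$: for a leaf cluster, build the TNFA $A_C$ and set $G(v)=G(A_C,\theta,\phi)$ via Lemma~\ref{lem:tnfamatchgraph}; for an internal cluster, run Steps 1--3, using Lemma~\ref{lem:matchgraphoperations} to form $G(p)$ and the final combination and Lemma~\ref{lem:tnfamatchgraph} to form the four automaton match graphs $G_{\theta,\phi}$, $G_{\theta,\theta'}$, $G_{\phi',\theta'}$, $G_{\phi',\phi}$. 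When the traversal reaches the root cluster we have $G(r)$ for the root $r$ of $T(R)$; since $R(r)=R$, I would report a match iff the edge corresponding to the whole string $Q[1,n]$ (the edge from $g_1$ to $g_{n+1}$) is present in $G(r)$.

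For correctness I would induct on the macro tree. In the base case $C$ is a leaf cluster and $G(A_C,\theta,\phi)$ records exactly the substrings of $Q$ matching $R(v)$ by the correctness of the state-set simulation in Lemma~\ref{lem:tnfamatchgraph}. In the inductive step $C$ is an internal cluster with extended node $p$, and the induction hypothesis gives correct match graphs for the child clusters, in particular for the clusters rooted at the children of $p$; then Lemma~\ref{lem:matchgraphoperations} makes $G(p)$ the match graph of $R(p)$ (Step 1), Lemma~\ref{lem:tnfamatchgraph} gives the four match graphs of Step 2, and Lemma~\ref{lemma_correcness} shows the Step 3 combination equals $G(v)$. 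Applied at the root cluster, this yields that $G(r)$ is the match graph of $R$, which justifies the reporting step (and, when $k=0$, the whole computation is a single invocation of Lemma~\ref{lem:tnfamatchgraph}).

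For the resources, at a cluster $C$ with $m_C$ nodes I spend $O(n^\omega)$ on the constantly many concatenation, star, union, intersection, and complement match-graph operations (Lemma~\ref{lem:matchgraphoperations}) and $O(n^2 m_C)$ on building $A_C$ and simulating it (Lemma~\ref{lem:tnfamatchgraph}), hence $O(n^\omega + n^2 m_C)$ in total. Summing over the $O(k)$ clusters, and using that the clusters are node-disjoint so $\sum_{C\in CS} m_C = O(m)$, gives $O(n^\omega k + n^2 m)$, plus $O(m)$ to build $T(R)$ and $CS$, for the claimed time bound $O(n^\omega k + n^2 m + m)$. For space, each match graph occupies $O(n^2/w + 1)$ words and the algorithm retains only a constant number per cluster, totalling $O(n^2 k/w)$; the TNFAs and state-sets need only $O(m)$ further words, which can be reused across clusters, and the input takes $O(n+m)$, for a total of $O(n^2 k/w + n + m) = O(n^2 + m)$. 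The only point requiring a little care is this global accounting --- that the auxiliary automaton space is reused rather than summed, and that (since the macro tree is binary, each cluster having at most two child clusters) a constant number of live match graphs per cluster is enough --- but it poses no real obstacle, since the substantive work is already done in Lemmas~\ref{lem:clustering}, \ref{lem:matchgraphoperations}, \ref{lem:tnfamatchgraph}, and \ref{lemma_correcness}.
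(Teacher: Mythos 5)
Your proposal is correct and follows essentially the same route as the paper: build the cluster partition, compute match graphs bottom-up over the macro tree using Lemmas~\ref{lem:tnfamatchgraph} and~\ref{lem:matchgraphoperations}, invoke Lemma~\ref{lemma_correcness} for correctness of internal clusters, and charge $O(n^\omega + n^2 m_C)$ per cluster, summing to $O(n^\omega k + n^2 m)$ time and $O(n^2k/w + n + m)$ space via node-disjointness of clusters and a constant number of stored match graphs per cluster. Your explicit remarks on reusing the auxiliary automaton space and on $\sum_C m_C = O(m)$ are details the paper leaves implicit, but the argument is the same.
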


\section{Reducing Space}\label{sec:reducingspace}
We now show how to reduce the $O(n^2k/w + n + m)$ space of the algorithm from the previous section to $O(n^2 \log k/w + n + m)$. We modify the bottom-up traversal to use a top-down recursive traversal, where we always process the child cluster with the largest subtree in the macro tree and carefully discard match graphs as the algorithm proceeds.

\subsection{Heavy Path Decomposition}
Let $M$ be the macro tree for $R$. Given a cluster $C\in CS$, define the \emph{weight} of $C$ to be the number of descendant clusters of $C$ in $M$. As in the \emph{heavy-path decomposition} of Sleator and Tarjan~\cite{ST1983}, we classify each cluster in $M$ as either \emph{heavy} or \emph{light}. The root of $M$ is light, and for each internal cluster $C$ with two child clusters, we pick a child cluster of maximum weight and classify it as heavy. The other child cluster is light. An edge to a light child is a \emph{light edge} and an edge to a heavy child is a \emph{heavy edge}. Removing the light edges, we partition $M$ into \emph{heavy paths}. 

\begin{lemma}[Sleater and Tarjan~\cite{ST1983}]\label{lem:heavypathdecomposition}
    Let $M$ be a macro tree with $\ell$ nodes. Any root-to-leaf path intersects at most $\log \ell + O(1)$ heavy paths.  
\end{lemma}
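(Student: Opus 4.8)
The plan is to follow the classical argument of Sleator and Tarjan, rephrased for the macro tree $M$. The quantity to track is $\size(C)$, the number of descendant clusters of $C$ in $M$. First I would establish the central halving property: if $C'$ is a light child of a cluster $C$ in $M$, then $\size(C') \le \size(C)/2$. A light child only arises at a cluster $C$ with two children; let $C''$ be the heavy child of $C$, which was chosen to have maximum size among the children, so $\size(C'') \ge \size(C')$. The descendant clusters of $C'$ and of $C''$ are disjoint and both are descendants of $C$, so $\size(C) \ge \size(C') + \size(C'') \ge 2\,\size(C')$, which is the claim. An internal cluster with a single child creates no light edge (we may simply regard its unique child as heavy), so only the two-child case needs this estimate.

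Next I would bound the number of light edges on an arbitrary root-to-leaf path $p$ in $M$. List the light edges of $p$, in order from the root, as leading into clusters $C_1, C_2, \dots, C_t$. Sizes are nonincreasing along any downward path, so combining this with the halving property gives $\size(C_{i+1}) \le \size(C_i)/2$; moreover $\size(C_1) \le \size(\text{root of }M) \le \ell$ and $\size(C_t) \ge 1$. Hence $1 \le \size(C_t) \le \ell/2^{\,t-1}$, which yields $t \le \log \ell + 1$.

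Finally I would translate this into the heavy-path bound. Removing the light edges partitions $M$ into heavy paths, and the path $p$ enters a new heavy path precisely when it begins (at the root) or traverses a light edge. Therefore the number of distinct heavy paths intersected by $p$ is at most $t + 1 \le \log \ell + 2 = \log \ell + O(1)$, as claimed.

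None of these steps is genuinely hard; the point that requires care is the halving property, specifically keeping the descendant bookkeeping consistent (that the subtrees hanging off the two children of $C$ are disjoint, and being uniform about whether a cluster counts itself) and confirming that single-child internal clusters do not spuriously introduce light edges. Everything else is a routine geometric-series estimate.
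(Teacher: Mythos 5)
Your argument is correct and is exactly the classical Sleator--Tarjan size-halving argument that the paper invokes by citation (it gives no proof of its own for this lemma). The two points you flag as needing care -- the self-counting convention in $\size$ and the treatment of single-child clusters -- are handled consistently, so nothing further is required.
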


\subsection{Macro Tree Traversal}
We process the cluster as in the algorithm of Section~\ref{sec:improvedalgorithmsub}, but we modify the ordering as follows. There are three cases: 
\begin{description}
    \item[Case 1. $C$ is a leaf cluster.] We process $C$ as before.
    \item[Case 2. $C$ has a single child cluster.] We recursively process the child cluster before processing~$C$. 
    \item[Case 3. $C$ has two child clusters.] We recursively process the heavy child cluster of $C$, then recursively process the light child cluster, then process $C$.
\end{description}
After processing $C$, we discard the match graphs of any children of $C$. 

Since processing a cluster requires only the match graphs of its children, the algorithm correctly solves the problem as before. We show that this reduces the number of stored match graphs to $O(\log k)$. Consider the algorithm at the time of processing a cluster $C$, and let $L$ be the path of clusters from the root of the macro tree to $C$. By the ordering, the algorithm stores, at this point, the match graphs of the children of $C$ and, for each light cluster on $L$, the match graph of its heavy sibling. Recall from Lemma~\ref{lem:clustering} that $M$ contains $\ell = O(k)$ nodes. Hence, by Lemma~\ref{lem:heavypathdecomposition}, we thus store at most $\log \ell + O(1) = O(\log k)$ match graphs. In total, the algorithm uses $O(n^2 \log k/w + n + m)=O(n^2+m)$ space. 

\begin{theorem}\label{thm:simplesmallspace}
    Given an extended regular expression $R$ of length $m$ containing $k$ extended operators and a string $Q$ of length $n$, we can solve the extended regular expression matching problem for $R$ and $Q$ in space $O(n^2 + m)$ and time $O(n^\omega k + n^2m)$.
\end{theorem}

\section{Black-Box TNFA Simulation}\label{sec:blackbox}
We now show how to plug in any general fast algorithm that can efficiently simulate a TNFA. Recall, that given a TNFA $A$ of size $m$, a string $Q$ of length $n$, and two states $s, t$ in $A$, a \emph{TNFA simulation algorithm} outputs every prefix $i$ of $Q$ such that $Q[1,i]$ matches a path from $s$ to $t$. Let $T(n,m)$ and $S(n,m)$ denote the time and space, respectively, of the TNFA simulation algorithm. Observe that Lemma~\ref{lem:tnfamatchgraph} uses a TNFA simulation algorithm, namely Thompson's algorithm~\cite{Thompson1968}, with $T(n,m) = O(nm)$ and $S(n,m) = O(m)$, to construct the match graph in $O(n^2m)$ time. This result is used within our algorithm for the construction of match graphs from TNFAs in case 1 and in case 2, step 2. If we plug any general TNFA simulation algorithm into our algorithm, we obtain Theorem~\ref{thm:main_reduction}.


\section{Interval Operator}\label{sec:intervaloperator}
We now show how to extend our solution to efficiently support the interval operator. We do this by adding the operator to our match graphs and plugging it into our algorithm. Our solution is surprisingly simple and demonstrates the power of our framework.

Let $R$ be a regular expression. For integers $a$ and $b$, $0 \leq a \leq b$, the \emph{interval} operator, $R^{[a,b]}$ specifies the set of strings that consists of $\ell$ concatenations of $R$, where $\ell \in [a, b]$. Thus, the language $L(R^{[a,b]})$ is given by $L(R^{[a,b]}) = \bigcup_{a \leq i \leq b} L(R)^i$ (recall that $L(R)^i$ is the set of strings from $R$ concatenated with itself $i$ times). As a shorthand, we write $R^a$ to denote $R^{[a,a]}$. We have the following properties of the interval operator that follow immediately from the definition. 
\begin{lemma}\label{lem:intervalproperties}
    Let $R$ be an extended regular expression and let $a$ and $b$ be integers such that $0 \leq a \leq b$. Then, 
    \begin{itemize}
        \item[(i)] $R^{[a,b]} = R^{a} \odot R^{[0,b-a]}$, 
        \item[(ii)] $R^a \odot R^b = R^{a+b}$, and 
        \item[(iii)] $R^{[0,a]} \odot R^{[0,b]} = R^{[0, a + b]}$.
    \end{itemize}  
\end{lemma}
We extend our match graphs from Section~\ref{sec:matchgraphsgeneral} to include the interval operator. Let $G$ be a match graph on vertices $g_1, \ldots, g_{n+1}$. For the match graph $G^{[a,b]}$, we add an edge $(g_i, g_j)$ for each pair $i,j$ with a directed path from $g_i$ to $g_j$ in $G$ with length $\ell$, where $a \leq \ell \leq b$. Using the properties of Lemma~\ref{lem:intervalproperties} and repeated doubling, we show how to compute $G^{[a,b]}$ from $G$ with $O(\log b)$ matrix multiplications. 

\begin{lemma}\label{lem:matchgraphinterval}
Let $G$ be a match graph on a string of length $n$ and let $a$ and $b$ be integers such that $0 \leq a \leq b$. Then, we can store $G$ in $O(n^2/w + 1)$ space and compute $G^{[a,b]}$ in $O(n^\omega \log b)$ time, for any integers $a$ and $b$, $0 \leq a \leq b$. 
\end{lemma}

\begin{proof}
    Let $G$, $a$, and $b$ be given as above. As in Section~\ref{sec:dynamicprogramming}, we store match graphs as boolean matrices in $O(n^2/w + 1)$ space. Let $j$ be the largest integer such that $2^j \leq a$. We first compute $G^{a}$. To do so, we compute the sequence of match graphs $G^{1}$, $G^{2}$,$G^{4}$ \ldots, $G^{2^j}$ from $G$ as follows. We have $G^1 = G$ and for $i = 1, \ldots j$, we compute 
    \begin{equation}\label{eq:intervalpart1}
    G^{2^i} = G^{2^{i-1}} \odot G^{2^{i-1}}.
\end{equation}
    Now, let $r_1, \ldots, r_z$ be the indices of the $1$ bits in the binary representation of $a$, and compute  
\begin{equation}\label{eq:intervalpart2}
    G^{a} = \bigodot_{1 \leq i \leq z} G^{2^{r_i}}.
\end{equation}
Correctness for \eqref{eq:intervalpart1} and \eqref{eq:intervalpart2} follows from Lemma~\ref{lem:intervalproperties}(ii). 

We compute $G^{[0, b-a]}$ using the same approach. Let $j'$ be the largest integer such that $2^{j'} \leq b - a$. We compute the sequence of match graphs $G^{[0,1]}$, $G^{[0,2]}$, \ldots, $G^{[0,2^{j'}]}$, where we construct $G^{[0,1]}$ from $G$ by adding self-loops to each node, and $G^{[0, 2^{i}]}$, for $i = 1, \ldots, j'$, as $G^{[0, 2^{i}]} = G^{[0, 2^{i-1}]} \odot G^{[0, 2^{i-1}]}$ using Lemma~\ref{lem:intervalproperties}(iii). We then compute $G^{[0, b-a]}$ by concatenating the match graphs of the match graph sequence corresponding to the binary representation of $b-a$ as in \eqref{eq:intervalpart2}. Finally, we compute $G^{[a,b]} = G^{a} \odot G^{[0,b-a]}$ according to Lemma~\ref{lem:intervalproperties}(i). 

To compute $G^a$ and $G^{[0,b -a]}$ we use $O(\log a + \log (b-a)) = O(\log b)$ match graph concatenations. The remaining step of the algorithm uses a constant number of operations. By Lemma~\ref{lem:matchgraphoperations}, the total time is $O(n^\omega \log b)$. We can avoid storing the $O(j)$ match graphs during the computation of $G^a$  by computing \eqref{eq:intervalpart2} in parallel with $G^{1}$, $G^{2}$, \ldots, $G^{2^j}$ and concatenating $G^{2^{r_i}}$ in \eqref{eq:intervalpart2} as soon as it is computed.

This way, we only store a constant number of previously computed match graphs. The same applies when computing $G^{[0,b -a]}$. Thus, the total space is $O(n^2/w + 1)$. 
\end{proof}

We plug the match graph construction from Lemma~\ref{lem:matchgraphinterval} for the interval operator into our main algorithm of Section~\ref{sec:improvedalgorithmsub} whenever we compute $G(p)$ in step 1. All of the other steps remain the same. The further techniques in Sections~\ref{sec:reducingspace} and~\ref{sec:blackbox} still apply. In summary, we have shown the following result. 

\begin{theorem}\label{thm:maininterval}
Given an extended regular expression $R$ of length $m$ containing $k$ extended operators, $k'$ interval operators, and a string $Q$ of length $n$, we can solve the extended regular expression matching problem for $R$ and $Q$ in space $O(\frac{n^2 \log (k+k')}{w} + n + m) = O(n^2 +m)$ and time 
\[
O\left(n^\omega (k + k'\log B) + \frac{n^2m}{\max(w/\log w, \log (n+m))} + m\right).
\]
Here, $B$ is the maximum of the upper bounds in ranges for the $k'$ interval operators in $R$. 
\end{theorem}
Thus, compared to Theorem~\ref{thm:main}, we only incur an additional $\log B$ factor on each of the $k'$ interval operators.

\section{Concluding remarks}
In this work, we have given an algorithm for extended regular expression matching, which uses simple techniques to improve the classical dynamic programming solution.
We also show how the framework can be used as a black-box to plug in any TNFA simulation algorithm. If we plug in the state-of-the-art, then our algorithm outperforms existing algorithms both in terms of time and space.
Lastly, we show that our framework is easily extendable to other operators as one simply has to define the corresponding match graph operation. It would be interesting to consider extending the regular expressions further, by adding more operators to the framework.

\bibliographystyle{plain}
\bibliography{references}

\end{document}